\providecommand{\tabularnewline}{\\}
\theoremstyle{plain}
\newtheorem{thm}{\protect\theoremname}
\theoremstyle{definition}
\newtheorem{example}[thm]{\protect\examplename}
\newtheorem{assumption}{Assumption}\newtheorem{theorem}{Theorem}\newtheorem{corollary}{Corollary}\newtheorem{lemma}{Lemma}
\newtheorem{step}{Step}\newtheorem{remark}{Remark}
\newcommand{\bbR}{\mathbb{R}}
\newcommand{\E}{\mathrm{E}}
\newcommand{\pr}{\mathrm{pr}}
\newcommand{\var}{\mathrm{var}}
\newcommand{\bfX}{{X}}
\newcommand{\bfx}{{x}}
\newcommand{\bfu}{{u}}
\newcommand{\vctl}{\mathrm{vecl}}
\newcommand{\csp}{\mathrm{span}}
\newcommand{\cond}{\mid}
\newcommand{\T}{ \mathrm{\scriptscriptstyle T}}
\newcommand{\indep}{\;\, \rule[0em]{.03em}{.67em} \hspace{-.25em}
	\rule[0em]{.65em}{.03em} \hspace{-.25em}
	\rule[0em]{.03em}{.67em}\;\,}
\newcommand{\bbP}{\mathbb{P}}
\providecommand{\examplename}{Example}
\providecommand{\theoremname}{Theorem}
\providecommand{\examplename}{Example}
\providecommand{\theoremname}{Theorem}
\begin{document}
\title{\textbf{\huge{}{}{}Robust inference of conditional average treatment
effects using dimension reduction }}
\author{Ming-Yueh Huang\thanks{Institute of Statistical Science, Academia Sinica, Taipei 11529, Taiwan
China. Email: myh0728@stat.sinica.edu.tw}$\ $ and Shu Yang\thanks{Department of Statistics, North Carolina State University, North Carolina
27695, U.S.A. Email: syang24@ncsu.edu.}}

\maketitle
\bigskip{}

\noindent \begin{flushleft}
\textbf{Abstract}: It is important to make robust inference of the
conditional average treatment effect from observational data, but
this becomes challenging when the confounder is multivariate or high-dimensional.
In this article, we propose a double dimension reduction method, which
reduces the curse of dimensionality as much as possible while keeping
the nonparametric merit. We identify the central mean subspace of
the conditional average treatment effect using dimension reduction.
A nonparametric regression with prior dimension reduction is also
used to impute counterfactual outcomes. This step helps improve the
stability of the imputation and leads to a better estimator than existing
methods. We then propose an effective bootstrapping procedure without
bootstrapping the estimated central mean subspace to make valid inference. 
\par\end{flushleft}

\textit{Key words}: augmented inverse probability weighting; matching;
kernel smoothing; U-statistic; weighted bootstrap.

\pagebreak{}

\section{Introduction\label{sec:intro}}

In recent biomedical and public health research, there has been growing
interest in developing valid and robust inference methods for the
conditional average treatment effect, which is also known as the treatment
contrast or heterogeneity of treatment effect. In particular, the
sign of conditional average treatment effects can be used to determine
the optimal individualized treatment regime \citep{QianM2011Performance}.
In the era of big data, large observational data with multivariate
or high-dimensional confounders are becoming increasingly available
for research purposes, such as electronic health records, claims,
and disease registries. It becomes challenging to make robust inference
of the conditional average treatment effect due to the curse of dimensionality,
calling for new techniques such as dimension reduction.

A large body of literature focuses on modeling the prognostic scores,
defined as the outcome mean functions under the treated and control
conditions. Parametric approaches include likelihood-based methods
\citep{ThallMS2000Evaluating,ThallSE2002Selecting,ThallWLMT2007Bayesian}
and parametric Q-learning \citep{ChakrabortyMS2010Inference}. Machine
learning-based methods include Bayesian additive regression trees
\citep{ChipmanGM2010Bayesian}, causal multivariate adaptive regression
splines \citep{PowersQJSSHT2018Some}, random forest, causal boosting
trees, and reinforcement learning \citep{ZhaoZSK2011Reinforcement}.
Although the conditional average treatment effect is simply the difference
between the treated and the control prognostic scores, they may have
different model features. Thus, in some applications, direct modeling
on the conditional average treatment effect may provide a more accurate
characterization of treatment effects, avoiding redundancy of non-useful
features. Another body of literature focuses on modeling and approximating
the conditional average treatment effect both parametrically \citep{Murphy2003Optimal,Robins2004Optimal}
and using machine learning methods \citep{ZhaoZRK2012Estimating,ZhangTDZL2012Estimating,RzepakowskiJ2012Decision,AtheyI2016Recursive,AtheyTW2019Generalized}.
However, parametric methods are susceptible to model misspecification.
Although machine learning is flexible, it often produces results that
are too complicated to be interpretable. Most importantly, it is a
daunting task to draw valid inference based on machine learning methods.
Semiparametric methods offer compromises between fully parametric
and machine learning approaches. \citet{SongLZZLL2017Semiparametric}
and \citet{LiangY2020Semiparametric} considered single index and
multiple index models for the treatment contrast function, respectively.
See also \citet{LuoWZ2019Learning}.

In this article, we propose a nonparametric framework to make robust
inference of the conditional average treatment effect. To mitigate
the possible curse of dimensionality, we consider the central mean
subspace of the conditional average treatment effect, which is the
smallest linear subspace spanned by a set of linear index that can
sufficiently characterize the estimand of interest \citep{CookL2002Dimension}.
Precisely speaking, a series of nested multiple index models are considered
and the most saturated model is the fully nonparametric regression.
Any conditional average treatment effect must belong to this series
with a particular number of linear indices, which is called the structural
dimension. The primary goal is to estimate this structural dimension
and the corresponding index coefficients. Under this framework, we
specify the conditional average treatment effect nonparametrically
and use a model selection procedure to determine a sufficient structural
dimension.

To estimate the central mean subspace, we propose imputing counterfactual
outcomes by kernel regression with a prior dimension reduction. The
prior dimension reduction helps to improve the stability of the imputation
and the subsequent estimation of the conditional average treatment
effect. It is worth comparing our imputation approach to alternative
methods. For example, nearest neighbor imputation can be used to impute
the missing potential outcome by pairing each unit with the nearest
neighbor in the opposite treatment group. However, matching is generally
not effective in the presence of many covariates. \citet{AbrevayaHL2015Estimating}
considered an inverse probability weighted adjusted outcome for the
conditional average treatment effect. It is well known that weighted
by the inverse probabilities is highly unstable \citep{KangS2007Demystifying}.
To overcome this instability, \citet{ZhaoZRK2012Estimating} further
considered combining weighting and outcome regression as an augmented
inverse probability adjusted outcome. In our simulation study, regression
imputation and augmented inverse probability weighting have comparable
performances. Because the inverse weighting methods require additional
propensity score estimation, we suggest regression imputation to save
more computational time in practice.

Theoretically, we derive the consistency and asymptotic normality
of the proposed estimator of the conditional average treatment effect.
The main challenge is that the imputed counterfactual outcomes are
not independent. To overcome this challenge, we calculate the difference
between\textbf{ }imputed and conditional counterfactual outcomes,
which can be expressed as a weighted empirical average of the influence
functions of the kernel regression estimator. Thus, we can show that
the influence function of the proposed estimator can be approximated
by a U-statistic. Invoking the properties of degenerate U-processes
discussed in \citet{NolanP1987U}, we can derive the asymptotic distribution
of the estimated conditional average treatment effect and show that
the imputation step plays a non-negligible role. To make valid inference,
we propose a under-smooth strategy such that the asymptotic bias is
dominated by the asymptotic variance. We can estimate the asymptotic
variances by applying weighted bootstrap techniques and construct
Wald confidence intervals. Interestingly, the fact that the central
mean subspace is estimated does not affect the asymptotic distribution
of the proposed estimator of the conditional average treatment effect.
Thus, in our bootstrap procedure, we can safely skip the step of bootstrapping
the estimated central mean subspace, which saves a lot of computation
time in practice.


\section{Methodology\label{sec:metho}}

\subsection{Preliminaries\label{subsec:pre}}

We use potential outcomes to define causal effects. Suppose that the
binary treatment is $A\in\{0,1\}$, with $0$ and $1$ being the labels
for the control and active treatment, respectively. Each level of
treatment $a$ corresponds to a potential outcome $Y(a)$, representing
the outcome had the subject, possibly contrary to the fact, been given
treatment $a$. The individual causal effect is $D=Y(1)-Y(0)$. Let
$\bfX\in\mathbb{R}^{p}$ be a $p$-vector of pre-treatment covariates.
The covariate-specific average treatment effect is $\tau(\bfx)=\E\{Y(1)-Y(0)\cond\bfX=\bfx\}=\E(D\cond\bfX=\bfx)$.
The observed outcome is $Y=Y(A)=AY(1)+(1-A)Y(0)$. The main goal of
this article is to estimate $\tau(\bfx)$ based on observational data
$\{(A_{i},Y_{i},\bfX_{i}):i=1,\dots,n\}$, which independently and
identically follows $f(A,Y,\bfX)$.

To identify the treatment effects, we assume the following assumptions,
which are standard in causal inference with observational studies
\citep{RosenbaumR1983Central}:

\begin{assumption}\label{assump:ignorable} $\{Y(0),Y(1)\}\indep A\cond\bfX$.\end{assumption}

\begin{assumption}\label{asump:overlap} There exist constants $c_{1}$
and $c_{2}$ such that $0<c_{1}\leq\pi(\bfX)\leq c_{2}<1$ almost
surely, where $\pi(\bfx)=\pr(A=1\cond\bfX=\bfx)$ is the propensity
score. \end{assumption}

Let $\mu_{a}(\bfx)=\E\{Y(a)\cond\bfX=\bfx\}$ ($a=0,1$). Under Assumptions
\ref{assump:ignorable}--\ref{asump:overlap}, $\mu_{a}(\bfx)=\E(Y\cond A=a,\bfX=\bfx)$
and $\tau(\bfx)=\mu_{1}(\bfx)-\mu_{0}(\bfx)$ are identifiable from
$f(A,Y,\bfX)$. This identification formula motivates a common strategy
of estimating $\tau(\bfx)$ by approximating $\mu_{a}(X)$ separately
for $a=0,1$. Alternatively, we propose robust inference of $\tau(\bfx)$
directly using dimension reduction, which requires no parametric model
assumptions and can detect accurate and parsimonious structures of
$\tau(\bfx)$.

\subsection{Dimension reduction on conditional average treatment effect}

The main idea is to search for the fewest linear indices $B_{\tau}^{\T}\bfx$
such that 
\begin{align}
\tau(\bfx)=g(B_{\tau}^{\T}\bfx),\label{eq:SDM}
\end{align}
where $B_{\tau}$ is a $p\times d_{\tau}$ matrix consisting of index
coefficients, and $g$ is an unknown $d_{\tau}$-variate function.
Since $\tau(\bfx)=\E(D\cond\bfX=\bfx)$, the column space of $B_{\tau}$
is called the central mean subspace of $D$ given $\bfX$, denoted
by $\mathcal{S}_{\E(D|\bfX)}$ \citep{CookL2002Dimension}.

The central mean subspace $\mathcal{S}_{\E(D|\bfX)}$ is nonparametric.
In other words, for any multivariate function $\tau(\bfx)$, without
particular parametric or semiparametric modeling, there always exists
a unique central mean subspace. To illustrate, consider the single-index
model $g(\bfx^{\T}\beta)$ which leads to a one-dimensional central
mean subspace spanned by $\beta$. Unlike the single-index model that
prefixes the dimension of the central mean subspace, we leave both
$d_{\tau}$ and $B_{\tau}$ unspecified, and the primary goal of dimension
reduction is to estimate $d_{\tau}$ and $B_{\tau}$. In addition,
the curse of dimensionality can be avoided if $d_{\tau}$ is much
smaller than $p$.

\begin{remark} Recall that $\tau(\bfx)=\mu_{1}(\bfx)-\mu_{0}(\bfx)$.
An alternative way to employ dimension reduction is to search for
two sets of linear indices $B_{0}^{\T}\bfx$ and $B_{1}^{\T}\bfx$
such that 
\begin{align}
\mu_{0}(\bfx)=g_{0}(B_{0}^{\T}\bfx),\qquad\mu_{1}(\bfx)=g_{1}(B_{1}^{\T}\bfx),\label{eq:outcomeMs}
\end{align}
where $g_{0}$ and $g_{1}$ are unknown functions. That is, we can
also estimate $\mathcal{S}_{\E\{Y(0)|\bfX\}}=\csp(B_{0})$ and $\mathcal{S}_{\E\{Y(1)|\bfX\}}=\csp(B_{1})$,
and then recover $\tau(\bfx)$ by $g_{1}(B_{1}^{\T}\bfx)-g_{0}(B_{0}^{\T}\bfx)$.
In fact, we can show that $\mathcal{S}_{\E(D|\bfX)}\subseteq\mathcal{S}_{\E\{Y(0)|\bfX\}}+\mathcal{S}_{\E\{Y(1)|\bfX\}}$,
where the sum of two linear subspaces is $U+V=\{u+v:u\in U,v\in V\}$.
In some cases $\mathcal{S}_{\E(D|X)}$ may have a strictly smaller
dimension than $\mathcal{S}_{\E\{Y(0)|\bfX\}}$ and $\mathcal{S}_{\E\{Y(1)|\bfX\}}$
as demonstrated by the following example. Thus, using model (\ref{eq:SDM})
may detect more parsimonious structures of $\tau(\bfx)$ than using
model (\ref{eq:outcomeMs}).

\end{remark} 
\begin{example}
Let $Y(0)=\alpha^{\T}\bfX+(\beta^{\T}\bfX)^{2}$ and $Y(1)=\alpha^{\T}\bfX+(\beta^{\T}\bfX)^{3}$,
where $\alpha,\beta\in\mathbb{R}^{p}$, and $\alpha$ and $\beta$
are not linearly dependent. Then, $\dim(\mathcal{S}_{\E\{Y(0)|\bfX\}})=\dim(\mathcal{S}_{\E\{Y(1)|\bfX\}})=\dim\{\csp(\alpha,\beta)\}=2$,
while $\dim(\mathcal{S}_{\E(D|\bfX)})=\dim\{\csp(\beta)\}=1$. 
\end{example}

\begin{remark} As discussed in \citet{MaZ2013Efficient}, the parameter
$B$ is not identifiable without further restrictions. To see this,
suppose that $Q$ is an invertible $d\times d$ matrix and consider
$g^{\ast}(\bfu)=g\{(Q^{\T})^{-1}\bfu\}$. Then we can derive another
equivalent representation of $\tau(x)$ as 
\begin{align*}
\tau(\bfx)=g(B^{\T}\bfx)=g\{(Q^{\T})^{-1}Q^{\T}B^{\T}\bfx\}=g^{\ast}\{(BQ)^{\T}\bfx\}.
\end{align*}
Thus, the two sets of parameters $(B,g)$ and $(BQ,g^{\ast})$ correspond
to the same conditional average treatment effect. As a result, the
central subspace was introduced to make the column space invariant
to these invertible linear transformations. We use a particular parametrization
of the central mean subspace as used in \citet{MaZ2013Efficient}.
Without loss of generality, we set the upper $d\times d$ block of
$B$ to be the identity matrix $I_{d\times d}$ and write $\bfX=(\bfX_{u}^{\T},\bfX_{l}^{\T})^{\T}$,
where $\bfX_{u}\in\bbR^{d}$ and $\bfX_{l}\in\bbR^{p-d}$. Hence,
the free parameters are the lower $(p-d)\times d$ entries of $B$,
corresponding to the coefficients of $\bfX_{l}$. For generic matrix
$B$, we now denote $\vctl(B)$ as the vector formed by the lower
$(p-d)\times d$ entries of $B$. \end{remark}

\subsection{Imputation and Estimation\label{subsec:Imputation-and-Estimation}}

If $D$ were known, existing methods can be directly applied to estimate
$\mathcal{S}_{\E(D|X)}$. However, the fundamental problem in causal
inference is that the two potential outcomes can never be jointly
observed for each unit, one is factual $Y(A)$ and the other one is
counterfactual $Y(1-A)$. To overcome this challenge, we propose an
imputation step to impute the counterfactual outcomes. A natural choice
to impute $Y(1-A)$ is using its conditional mean given $X$, $\mu_{1-A}(\bfX)$.
As mentioned in $\mathsection$ \ref{subsec:pre}, $\mu_{a}(\bfx)$
can be estimated by matching or other nonparametric smoothing techniques.
To further reduce the possible curse of dimensionality, we propose
a prior dimension reduction procedure to estimate $\mu_{a}(\bfx)$.

The proposed imputation and estimation procedure proceeds as follows.

\begin{step} Estimate the central mean subspace $\mathcal{S}_{\E\{Y(a)|\bfX\}}$
($a=0,1$). Let $\mu_{a}(\bfu;B)=\E(Y\cond A=a,B^{\T}\bfX=\bfu)$,
where $B$ is a $p\times d$ parameter matrix. Given $B$, the kernel
smoothing estimator of $\mu_{a}(\bfu;B)$ is 
\begin{align}
\widehat{\mu}_{a}(\bfu;B)=\frac{\sum_{j=1}^{n}Y_{j}1(A_{j}=a)\mathcal{K}_{q,h}(B^{\T}\bfX_{j}-\bfu)}{\sum_{j=1}^{n}1(A_{j}=a)\mathcal{K}_{q,h}(B^{\T}\bfX_{j}-\bfu)},\label{eq:mu-hat-B}
\end{align}
where $1(\cdot)$ is the indicator function, $\mathcal{K}_{q,h}(\bfu)=\prod_{k=1}^{d}K_{q}(u_{k}/h)/h$
with $\bfu=(u_{1},\dots,u_{d})$, $K_{q}$ is a $q$th ordered and
twice continuously differentiable kernel function with bounded support,
and $h$ is a positive bandwidth. The basis matrix of $\mathcal{S}_{\E\{Y(a)|\bfX\}}$
can be estimated by $\widehat{B}_{a}$, where $(\widehat{d}_{a},\widehat{B}_{a},\widehat{h}_{a})$
is the minimizer of the cross-validation criterion 
\begin{align}
\textsc{cv}_{a}(d,B,h)=\sum_{i=1}^{n}\{Y_{i}-\widehat{\mu}_{a}^{-i}(B^{\T}\bfX_{i};B)\}^{2}1(A_{i}=a),\label{eq:cv}
\end{align}
where the superscript $-i$ indicates the estimator (\ref{eq:mu-hat-B})
based on data without the $i$th subject. This criterion (\ref{eq:cv})
is a mean regression version of \citet{HuangC2017Effective}. In the
optimization, the order of the kernel function $q>\max(d/2+1,2)$
is specified for each working dimension $d$. \end{step}

\begin{step} Impute the individual treatment effect by 
\begin{align*}
\widehat{D}_{i}=A_{i}\{Y_{i}-\widehat{\mu}_{0}(\widehat{B}_{0}^{\T}\bfX_{i};\widehat{B}_{0})\}+(1-A_{i})\{\widehat{\mu}_{1}(\widehat{B}_{1}^{\T}\bfX_{i};\widehat{B}_{1})-Y_{i}\}\enspace(i=1,\dots,n)
\end{align*}
with specified orders $(q_{0},q_{1})$ of kernel functions and bandwidths
$(h_{0},h_{1})$ in $\widehat{\mu}_{0}(\widehat{B}_{0}^{\T}\bfX_{i};\widehat{B}_{0})\}$
and $\widehat{\mu}_{1}(\widehat{B}_{1}^{\T}\bfX_{i};\widehat{B}_{1})$.
The choices of $q_{0}$ and $q_{1}$ will be discussed in $\mathsection$
\ref{subsec:inference}. The bandwidths can be chosen as estimated
optimal bandwidths by nonparametric smoothing methods, such that $h_{a}=O_{\bbP}\{n^{-1/(2q_{a}+d_{a})}\}$,
where $d_{a}=\dim(\mathcal{S}_{\E\{Y(a)|\bfX\}})$ ($a=0,1$). \end{step}

\begin{step} Estimate the central mean subspace $\mathcal{S}_{\E(D|\bfX)}$
based on $\{(\widehat{D}_{i},X_{i}):$ $i=1,\dots,n\}$. Let $\tau(\bfu;B)=\E\{Y(1)-Y(0)\cond B^{\T}\bfX=\bfu\}$.
Given $B$, the kernel smoothing estimator of $\tau(\bfu;B)$ is 
\begin{align}
\widehat{\tau}(\bfu;B)=\frac{\sum_{j=1}^{n}\widehat{D}_{j}\mathcal{K}_{q,h}(B^{\T}\bfX_{j}-\bfu)}{\sum_{j=1}^{n}\mathcal{K}_{q,h}(B^{\T}\bfX_{j}-\bfu)}.\label{eq:tau-hat-B}
\end{align}
We then estimate $(d_{\mathrm{\tau}},B_{\mathrm{\tau}})$ and a suitable
bandwidth for $\widehat{\tau}(\bfu;B)$ by the minimizer $(\widehat{d},\widehat{B},\widehat{h})$
of the following criterion: 
\begin{align*}
\textsc{cv}(d,B,h)=n^{-1}\sum_{i=1}^{n}\{\widehat{D}_{i}-\widehat{\tau}^{-i}(B^{\T}\bfX_{i};B)\}^{2},
\end{align*}
where the superscript $-i$ indicates the estimator (\ref{eq:tau-hat-B})
based on data without the $i$th subject. Here, $q>\max(d/2+1,2)$
is also specified for each working dimension $d$. \end{step}

\begin{step}Estimate $\tau(\bfx)$ by $\widehat{\tau}(\widehat{B}^{\T}\bfx;\widehat{B})$
with some suitable choice of $(q_{\tau},h_{\tau})$, which will be
further discussed in $\mathsection$ \ref{subsec:inference}. \end{step}

\begin{remark}

In Step 2, we estimate the structural dimension and the basis matrix
simultaneously. On the other hand, \citet{LiangY2020Semiparametric}
considered the multiple index model with a fixed dimension of the
index and proposed the semiparametric efficient score of $B_{\tau}$.
As we will show in Theorem \ref{thm:B_a}, the asymptotic distribution
of $\widehat{B}$ does not affect the asymptotic distribution of the
estimated conditional average treatment effect as long as $\widehat{B}$
is root-$n$ consistent. Therefore, it is not necessary to pursue
the semiparametric efficiency estimation of the central mean subspace
in our context. \end{remark}

\begin{remark} An alternative method of imputing the counterfactual
outcomes is matching. To fix ideas, we consider matching without replacement
and with the number of matches fixed at one. Then the matching procedure
becomes nearest neighbor imputation \citep{LittleR2002Statistical}.
Without loss of generality, we use the Euclidean distance to determine
neighbors; the discussion applies to other distances \citep{AbadieI2006Large}.
Let $\mathcal{J}_{i}$ be the index set for the matched subject of
$i$th subject. Define the imputed missing outcome as $\widetilde{Y}_{i}(A_{i})=Y_{i}$
and $\widetilde{Y}_{i}(1-A_{i})=\sum_{j\in\mathcal{J}_{i}}Y_{j}$.
Then the individual causal effect can be estimated by $\widehat{D}_{\text{MAT},i}=\widetilde{Y}_{i}(1)-\widetilde{Y}_{i}(0)$.
Matching uses the full vector of confounders to determine the distance
and corresponding neighbors. When the number of confounders gets larger,
this distance may be too conservative to determine proper neighbors
due to the curse of dimensionality. In the simulation studies, we
find that the estimation of $\mathcal{S}_{\E(D|\bfX)}$ based on $\widehat{D}_{\text{MAT},i}$
has a poor performance.\end{remark}

\begin{remark}

Instead of imputing the counterfactual outcomes, weighting can also
be used to estimate $D_{i}$ directly. Several authors have considered
an adjusted outcome $\widehat{D}_{\text{IPW},i}=\{A_{i}-\pi(\bfX_{i})\}Y_{i}/[\pi(\bfX_{i})\{1-\pi(\bfX_{i})\}]$
by inverse propensity score weighting. The adjusted outcome is unbiased
of $\tau(X_{i})$ due to 
\begin{align*}
\E(\widehat{D}_{\text{IPW},i}\cond\bfX_{i})=\E\left\{ \frac{A_{i}Y_{i}}{\pi(\bfX_{i})}-\frac{(1-A_{i})Y_{i}}{1-\pi(\bfX_{i})}\cond\bfX_{i}\right\} =\E\{Y_{i}(1)-Y_{i}(0)\cond\bfX_{i}\}=\tau(\bfX_{i}).
\end{align*}
This approach is attractive in clinical trials, where $\pi(\bfX_{i})$
is known by trial design. In observational studies, $\pi(\bfX_{i})$
is usually unknown and needs to be estimated. \citet{AbrevayaHL2015Estimating}
considered kernel regression to estimate $\pi(\bfX_{i})$. To avoid
possible curse of dimensionality and keep the nonparametric merit,
we can perform a prior dimension reduction to find $B_{\pi}$ such
that $\pi(\bfX_{i})=\pr(A_{i}=1\cond B_{\pi}^{\T}\bfX_{i})$. Then
an improved estimator of $\pi(\bfX_{i})$ is 
\begin{align*}
\widehat{\pi}(\widehat{B}_{\pi}^{\T}\bfX_{i};\widehat{B}_{\pi})=\frac{\sum_{j=1}^{n}A_{j}\mathcal{K}_{q,h}(\widehat{B}_{\pi}^{\T}\bfX_{j}-\widehat{B}_{\pi}^{\T}\bfX_{i})}{\sum_{j=1}^{n}\mathcal{K}_{q,h}(\widehat{B}_{\pi}^{\T}\bfX_{j}-\widehat{B}_{\pi}^{\T}\bfX_{i})},
\end{align*}
where $\widehat{B}_{\pi}$ can be obtained similarly following Step
1 in $\mathsection$ \ref{subsec:Imputation-and-Estimation} by changing
the outcome to $A$. However, the estimator $\widehat{D}_{\text{IPW},i}=\{A_{i}-\widehat{\pi}(\widehat{B}_{\pi}^{\T}\bfX_{i};\widehat{B}_{\pi})\}Y_{i}/[\widehat{\pi}(\widehat{B}_{\pi}^{\T}\bfX_{i};\widehat{B}_{\pi})\{1-\widehat{\pi}(\widehat{B}_{\pi}^{\T}\bfX_{i};\widehat{B}_{\pi})\}]$
still suffers from the instability due to the inverse weighting, especially
when $\widehat{\pi}(\widehat{B}_{\pi}^{\T}\bfX_{i};\widehat{B}_{\pi})$
is close to zero or one. It is well known that the augmented inverse
propensity weighted estimator reduces this instability by combining
inverse propensity weighting and outcome regressions. Specifically,
the corresponding estimator of $D_{i}$ is 
\begin{align*}
\widehat{D}_{\text{AIPW},i}=\frac{Y_{i}-\{1-\widehat{\pi}(\widehat{B}_{\pi}^{\T}\bfX_{i};\widehat{B}_{\pi})\}\widehat{\mu}_{1}(\widehat{B}_{1}^{\T}\bfX_{i};\widehat{B}_{1})-\widehat{\pi}(\widehat{B}_{\pi}^{\T}\bfX_{i};\widehat{B}_{\pi})\widehat{\mu}_{0}(\widehat{B}_{0}^{\T}\bfX_{i};\widehat{B}_{0})}{\widehat{\pi}(\widehat{B}_{\pi}^{\T}\bfX_{i};\widehat{B}_{\pi})\{1-\widehat{\pi}(\widehat{B}_{\pi}^{\T}\bfX_{i};\widehat{B}_{\pi})\}}.
\end{align*}
One can easily show that $\E(\widehat{D}_{\text{AIPW},i}\cond\bfX_{i})$
is asymptotically unbiased of $\tau(\bfX_{i})$. The estimator $\widehat{D}_{\text{AIPW},i}$
is a refined version of \citet{LeeOW2017Doubly}, in which the propensity
scores are estimated without a prior dimension reduction. Our simulation
shows that the estimated central mean subspace and conditional average
treatment effect based on $\widehat{D}_{i}$ and $\widehat{D}_{\text{AIPW},i}$
are comparable and both outperform those based on $\widehat{D}_{\text{MAT},i}$
and $\widehat{D}_{\text{IPW},i}$. Since $\widehat{D}_{\text{AIPW},i}$
requires an extra dimension reduction on $\pi(\bfx)$ and, hence,
more computational time, our proposed $\widehat{D}_{i}$ is more computationally
efficient in practice. \end{remark}

\subsection{Inference\label{subsec:inference}}

In this subsection, we derive the large sample properties of $\widehat{B}$
and $\widehat{\tau}(\widehat{B}^{\T}\bfx;\widehat{B})$. Based on
the notation and regularity conditions in the Supplementary Material,
we first establish the following theorem for the prior sufficient
dimension reduction for $\mu_{a}(\bfx)$ $(a=0,1)$.

\begin{theorem}\label{thm:B_a} Suppose that Assumption \ref{assump:ignorable}
and Conditions A1--A5 are satisfied. Then $\pr(\widehat{d}_{a}=d_{a})\to1$,
$\widehat{h}_{a}=O_{\bbP}\{n^{-1/(2q+d_{a})}\}$, and 
\begin{align*}
n^{1/2}\vctl(\widehat{B}_{a}-B_{a})1(\widehat{d}_{a}=d_{a})=n^{1/2}\sum_{i=1}^{n}\xi_{B_{a},i}+o_{\bbP}(1)\to{\rm N}(0,\Sigma_{B_{a}})
\end{align*}
in distribution as $n\to\infty$, where $\xi_{B_{a}}=\{V_{a}(B_{a})\}^{-1}S_{a}(B_{a})$
and $\Sigma_{B_{a}}=\{V_{a}(B_{a})\}^{-1}\E\{S_{a}^{\otimes2}(B_{a})\}\{V_{a}(B_{a})\}^{-1}$
($a=0,1$). \end{theorem}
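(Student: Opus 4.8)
The plan is to treat $(\widehat{d}_a,\widehat{B}_a,\widehat{h}_a)$ as the minimizer of an M-estimation criterion and proceed in three stages: dimension-selection consistency, reduction to a smooth estimating equation at the true dimension, and a U-statistic analysis of that equation. Throughout, the leave-one-out kernel estimator $\widehat{\mu}_a^{-i}$ is restricted to the subsample $\{A_i=a\}$, on which Assumption \ref{assump:ignorable} gives $\mu_a(\bfx)=\E(Y\cond A=a,\bfX=\bfx)$, so the target is a genuine conditional-mean regression.

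First I would establish $\pr(\widehat{d}_a=d_a)\to 1$ by profiling. For each working dimension $d$, minimize $\textsc{cv}_a(d,B,h)$ over $(B,h)$ and compare the profiled values across $d$. Under Conditions A1--A5 the kernel estimator $\widehat{\mu}_a^{-i}$ is uniformly consistent, so the criterion converges to the population risk $\E[\{Y-\mu_a(B^{\T}\bfX;B)\}^2 1(A=a)]$. For an underfitting dimension $d<d_a$, no choice of $B$ recovers $\mu_a$, leaving an asymptotic bias bounded away from zero; for an overfitting dimension $d>d_a$, the population risk is matched but the extra smoothing directions inflate the estimation variance at a rate set by the bandwidth and the $q$th-order kernel, so the profiled criterion exceeds that at $d_a$ with probability tending to one. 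This is the mean-regression analog of the argument in \citet{HuangC2017Effective}, and it simultaneously identifies $\widehat{h}_a=O_{\bbP}\{n^{-1/(2q+d_a)}\}$ as the minimizing bandwidth.

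Next, on the event $\{\widehat{d}_a=d_a\}$, which carries asymptotic probability one, I would work with the free parameter $\theta=\vctl(B)$ and the first-order condition $\partial\,\textsc{cv}_a(d_a,B,\widehat{h}_a)/\partial\theta=0$. Differentiating the leave-one-out estimator and expanding around $B_a$ yields $n^{1/2}(\widehat{\theta}_a-\theta_a)=-\{V_a(B_a)\}^{-1}n^{-1/2}\sum_{i=1}^n s_{a,i}+o_{\bbP}(1)$, where the sample Hessian converges to $V_a(B_a)$ and $s_{a,i}$ are the per-observation score terms. The key technical input is the influence-function expansion of $\widehat{\mu}_a^{-i}(B^{\T}\bfX_i;B)$ and its $\theta$-derivative, valid uniformly over an $n^{-1/2}$-neighborhood of $B_a$ with the selected bandwidth; the $q$th-order kernel with $q>\max(d_a/2+1,2)$ guarantees that the smoothing bias enters at order $o(n^{-1/2})$ and is negligible.

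The decisive step is recognizing that the resulting score is, up to negligible remainders, a U-statistic in the observations because of the leave-one-out paired kernel weights. I would apply the Hoeffding projection to extract the linear i.i.d. part, identifying it with $n^{-1/2}\sum_{i=1}^n S_a(B_a)$ evaluated at observation $i$, and bound the higher-order degenerate components by $o_{\bbP}(n^{-1/2})$ via the maximal inequalities for degenerate U-processes in \citet{NolanP1987U}. The central limit theorem for the i.i.d. sum then delivers $n^{1/2}\vctl(\widehat{B}_a-B_a)1(\widehat{d}_a=d_a)=n^{1/2}\sum_{i=1}^n\xi_{B_a,i}+o_{\bbP}(1)\to\mathrm{N}(0,\Sigma_{B_a})$ with $\xi_{B_a}=\{V_a(B_a)\}^{-1}S_a(B_a)$ and the sandwich variance $\Sigma_{B_a}$. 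I expect the main obstacle to be the uniform control of the kernel estimator's derivatives together with the fact that the bandwidth is itself data-selected: establishing that the profiling over $h$ is asymptotically orthogonal to $\theta$, so that bandwidth estimation does not perturb the leading distribution of $\widehat{B}_a$, is the crux of the argument.
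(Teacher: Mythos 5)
Your proposal is correct and takes essentially the same route the paper relies on: the paper omits this proof, stating that Theorem 1 is a modification of the results of \citet{HuangC2017Effective}, and your three-stage argument---cross-validation profiling to get $\pr(\widehat{d}_a=d_a)\to1$ and the bandwidth rate, an M-estimation expansion of the first-order condition at the true dimension yielding the sandwich form $\{V_a(B_a)\}^{-1}S_a(B_a)$, and control of the leave-one-out kernel score via Hoeffding projection plus the degenerate U-process bounds of \citet{NolanP1987U}---is precisely the machinery of that reference. The same structure is mirrored in the supplement's proof of Theorem 2 (decomposition of the CV criterion, underfitting bias $b^2(B)$ bounded away from zero by the analog of Condition A4, U-process bounds on cross terms, then reduction to the Huang--Chiang result), so your reconstruction, including the honest flagging of the data-selected bandwidth as the delicate point, matches the intended argument.
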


Exact forms of $V_{a}(B_{a})$ and $S_{a}(B_{a})$ are presented in
the Supplementary Material. Theorem \ref{thm:B_a} is a modification
of results in \citet{HuangC2017Effective} and hence we omit the proof.
Theorem \ref{thm:B_a} is a building block to derive the asymptotic
distributions of the estimated central mean space and the proposed
estimator for $\tau(\bfx)$, taking into account the fact that $D_{i}$
is imputed.

\begin{theorem}\label{thm:B_M} Suppose that Assumption \ref{assump:ignorable}
and Conditions A1--A8 are satisfied. Then $\pr(\widehat{d}=d_{{\tau}})\to1$,
$\widehat{h}=O_{\bbP}\{n^{-1/(2q+d_{{\tau}})}\}$, and 
\begin{align*}
n^{1/2}\vctl(\widehat{B}-B_{{\tau}})1(\widehat{d}=d_{{\tau}})=n^{1/2}\sum_{i=1}^{n}\xi_{B_{{\tau}},i}+o_{\bbP}(1)\to{\rm N}(0,\Sigma_{B_{{\tau}}})
\end{align*}
in distribution as $n\to\infty$, where $\xi_{B_{{\tau}}}=\{V(B_{{\tau}})\}^{-1}S(B_{{\tau}})$
and $\Sigma_{B_{{\tau}}}=\{V(B_{{\tau}})\}^{-1}\E\{S^{\otimes2}(B_{{\tau}})\}\{V(B_{{\tau}})\}^{-1}$.
\end{theorem}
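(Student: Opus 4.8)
The plan is to establish Theorem~\ref{thm:B_M} by mirroring the M-estimation argument behind Theorem~\ref{thm:B_a}, but carefully tracking the additional error introduced by the fact that $\widehat{D}_i$ is imputed rather than observed. First I would treat $(\widehat{d},\widehat{B},\widehat{h})$ as the minimizer of the cross-validation criterion $\textsc{cv}(d,B,h)$ and argue, exactly as in the prior-reduction step, that the structural dimension is consistently selected, so that $\pr(\widehat{d}=d_\tau)\to 1$; on the event $\{\widehat{d}=d_\tau\}$ the optimization reduces to a smooth M-estimation problem in $\vctl(B)$ with a profiled-out bandwidth $\widehat{h}=O_{\bbP}\{n^{-1/(2q+d_\tau)}\}$. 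On this event I would write the first-order condition (the estimating equation obtained by differentiating the criterion in $\vctl(B)$), Taylor-expand it around the truth $B_\tau$, and invoke a uniform law of large numbers to identify the Hessian limit $V(B_\tau)$, reducing the problem to analyzing the normalized score $n^{1/2}\sum_i S(B_\tau)$-type term.

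The crux is that this score is built from $\widehat{D}_i$, not $D_i$, so I would decompose $\widehat{D}_i = D_i + (\widehat{D}_i - D_i)$ and handle the imputation error $\widehat{D}_i - D_i$ separately. Following the strategy outlined in the introduction, I would write the difference between the imputed and conditional counterfactual outcomes as a weighted empirical average of the influence functions of the kernel regression estimators $\widehat{\mu}_a(\widehat{B}_a^\T \bfX_i;\widehat{B}_a)$. Substituting these linearizations into the score and exchanging orders of summation, the contribution of the imputation error becomes a second-order V-statistic; after projecting onto its Hajek/U-statistic part and controlling the degenerate remainder via the degenerate U-process bounds of \citet{NolanP1987U}, this contribution collapses into an additional additive term in the influence function. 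The key observation, already flagged in the remark following Step~4, is that the estimation of $B_a$ (through $\widehat{B}_a$) and of the target subspace $B_\tau$ (through $\widehat{B}$) enters only at second order because the relevant derivatives vanish at the truth, so the randomness of the estimated subspaces does not propagate into the leading-order distribution; I would verify this by expanding in $\vctl(\widehat{B}_a - B_a)$ and showing the cross terms are $o_{\bbP}(n^{-1/2})$ using the root-$n$ consistency from Theorem~\ref{thm:B_a}.

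Having assembled all pieces, I would collect the leading terms into a single empirical average $n^{1/2}\sum_i \xi_{B_\tau,i}$ with $\xi_{B_\tau}=\{V(B_\tau)\}^{-1}S(B_\tau)$, confirm that $S(B_\tau)$ is a mean-zero influence function with finite second moment under Conditions~A1--A8, and apply the central limit theorem to obtain the stated Gaussian limit $\mathrm{N}(0,\Sigma_{B_\tau})$ with the sandwich form $\Sigma_{B_\tau}=\{V(B_\tau)\}^{-1}\E\{S^{\otimes2}(B_\tau)\}\{V(B_\tau)\}^{-1}$. Throughout I would rely on the under-smoothing choice of $q$ and $h$ so that the kernel bias is asymptotically negligible relative to $n^{-1/2}$, ensuring no bias term survives in the limit.

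The main obstacle will be the degenerate U-process analysis of the imputation-error contribution: I must show that the second-order terms arising from the kernel influence functions, once centered, form a degenerate U-process whose supremum over the relevant parameter set is $o_{\bbP}(n^{-1/2})$, while the non-degenerate projection contributes the exact additional term in $S(B_\tau)$. Handling this requires both the maximal inequalities for degenerate U-processes and uniform control over the estimated bandwidths and subspaces, and it is where the bulk of the technical work—and the non-negligible role of the imputation step—resides.
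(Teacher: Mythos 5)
Your overall architecture (dimension-selection consistency, then a score/Taylor expansion on the event $\{\widehat d=d_\tau\}$, with Nolan--Pollard bounds for second-order terms) is plausible, but it contains a genuine gap at its pivot: you decompose $\widehat{D}_i = D_i + (\widehat{D}_i - D_i)$ and treat $\widehat{D}_i - D_i$ as an estimation error that can be "written as a weighted empirical average of the influence functions of the kernel regression estimators." That is false. The imputed outcome does not converge to the counterfactual effect $D_i = Y_i(1)-Y_i(0)$; it converges to $Z_i = (2A_i-1)\{Y_i - \mu_{1-A_i}(B_{1-A_i}^{\T}X_i;B_{1-A_i})\}$. Hence $\widehat{D}_i - D_i = (\widehat{D}_i - Z_i) + (Z_i - D_i)$, and the second piece is the $O_{\bbP}(1)$ counterfactual residual: it is not an estimation error, admits no kernel-influence-function linearization, and is not killed by any degenerate U-process bound. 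Only $\widehat{D}_i - Z_i$ behaves as you describe. This is precisely why the theorem's score $S(B_\tau) = -\{Z - \E(Z \cond B_\tau^{\T}X)\}\tau^{[1]}(X;B_\tau)$ is stated in terms of $Z$ rather than $D$, and why the paper remarks that "the outcome contributing to the asymptotic distribution is now $Z$ instead of the counterfactual $D$." Your proposal never mentions $Z$, so the limiting variance you would assemble --- which depends on the residual of $Z$, not of $D$ --- is left unresolved, and the route as written would not produce the stated $\Sigma_{B_\tau}$.

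A second, related discrepancy: you assert that the imputation error "collapses into an additional additive term in the influence function," which contradicts the result being proved. The paper's proof establishes the opposite. Letting $\bar\tau^{-i}$ be the leave-one-out smoother of the $Z_j$'s, it decomposes the criterion as $\textsc{cv}(d,B,h) = SS_1 + SS_2 + SC_2 + (\text{remainder})$, where $SS_1 = n^{-1}\sum_i\{Z_i-\bar\tau^{-i}(B^{\T}X_i;B)\}^2$ is the oracle criterion, $SS_2 = n^{-1}\sum_i(\widehat{D}_i-Z_i)^2$ and $SC_2 = 2n^{-1}\sum_i(\widehat{D}_i-Z_i)\{Z_i-\tau(X_i)\}$ are \emph{independent of $B$} (so they cannot move the minimizer), and the remaining cross terms are shown uniformly negligible, with the one non-trivial term $SC_4$ handled as a degenerate U-process exactly as in Huang and Chiang (2017). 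The conclusion is that $\widehat{B}$ has the same asymptotics as the minimizer of $SS_1$, so Theorem 2 follows by citing Theorem 2 of Huang and Chiang (2017): no extra term enters the influence function of $\widehat{B}$. (Imputation does leave a first-order trace, but in Theorem 3, where $\{1-\pi(X)\}\varepsilon_1 - \pi(X)\varepsilon_0$ appears in the variance of $\widehat\tau$ --- not here.) If you re-center your expansion at $Z_i$, your score-based route could in principle be repaired, but you would then have to re-derive the M-estimation theory the paper simply imports, and you would carry the burden of showing that the Hajek projections you propose to retain are in fact either zero or absorbed into $B$-independent terms --- the very point the paper's criterion-level decomposition is designed to settle.
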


Exact forms of $V(B_{\tau})$ and $S(B_{\tau})$ are presented in
the Supplementary Material.

\begin{theorem}\label{thm:tau} Suppose that Assumption \ref{assump:ignorable}
and Conditions A1--A10 are satisfied. Then, 
\begin{align*}
(nh_{\tau}^{d_{{\tau}}})^{1/2}\{\widehat{\tau}(\widehat{B}^{\T}\bfx;\widehat{B})-\tau(\bfx)-h_{\tau}^{q_{\tau}}\gamma(\bfx)\}\to{\rm N}\{0,\sigma_{\tau}^{2}(\bfx)\}
\end{align*}
in distribution as $n\to\infty$, where 
\begin{align*}
 & \gamma(\bfx)=\left.\frac{\kappa\partial_{\bfu}^{q_{\tau}}\{\E(Z\cond B_{{\tau}}^{\T}\bfX=\bfu)f_{B_{{\tau}}^{\T}\bfX}(\bfu)\}-\E(Z\cond B_{{\tau}}^{\T}\bfX=\bfu)\partial_{\bfu}^{q_{\tau}}f_{B_{{\tau}}^{\T}\bfX}(\bfu)}{f_{B_{{\tau}}^{\T}\bfX}(\bfu)}\right\vert _{\bfu=B_{{\tau}}^{\T}\bfx},\\
 & \sigma_{\tau}^{2}(\bfx)=\frac{\left\{ \int K_{q_{\tau}}^{2}(s){\rm d}s\right\} ^{d_{\tau}}\var[Z+\{1-\pi(\bfX)\}\varepsilon_{1}-\pi(\bfX)\varepsilon_{0}\cond B_{{\tau}}^{\T}\bfX=B_{{\tau}}^{\T}\bfx]}{f_{B_{{\tau}}^{\T}\bfX}(B_{{\tau}}^{\T}\bfx)},
\end{align*}
$\kappa={\int s^{q_{\tau}}K_{q_{\tau}}(s){\rm d}s}/{q_{\tau}!}$,
$Z=(2A-1)\{Y-\mu_{1-A}(B_{1-A}^{\T}\bfX;B_{1-A})\}$, and $\varepsilon_{a}=\{Y-\mu_{a}(\bfX)\}1(A=a)$
($a=0,1$). \end{theorem}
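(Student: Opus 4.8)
The plan is to reduce $\widehat{\tau}(\widehat{B}^{\T}\bfx;\widehat{B})-\tau(\bfx)$ to a Nadaraya--Watson kernel regression of an oracle response $Z$ on $B_{\tau}^{\T}\bfX$, plus a correction coming from the imputation, and then to show that replacing $B_{\tau}$ by $\widehat{B}$ is asymptotically negligible. First I would rewrite the imputed effect as $\widehat{D}_i=(2A_i-1)\{Y_i-\widehat{\mu}_{1-A_i}(\widehat{B}_{1-A_i}^{\T}\bfX_i;\widehat{B}_{1-A_i})\}$ and record that its population analogue $Z$ satisfies $\E(Z\cond\bfX)=\tau(\bfX)$ under Assumption \ref{assump:ignorable}, so that $\E(Z\cond B_{\tau}^{\T}\bfX=\bfu)=\tau(\bfu;B_{\tau})$ and the estimator built from the true $Z_i$ is a genuine regression estimator of $\tau$. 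I would then use the telescoping decomposition
\[
\widehat{\tau}(\widehat{B}^{\T}\bfx;\widehat{B})-\tau(\bfx)=\mathrm{I}+\mathrm{II}+\mathrm{III},
\]
where $\mathrm{I}=\widehat{\tau}(\widehat{B}^{\T}\bfx;\widehat{B})-\widehat{\tau}(B_{\tau}^{\T}\bfx;B_{\tau})$ captures the effect of estimating $B_{\tau}$, $\mathrm{II}$ is the difference between the estimator using $\widehat{D}_i$ and the one using $Z_i$ at the fixed direction $B_{\tau}$, and $\mathrm{III}$ is the oracle estimator minus $\tau(\bfx)$.

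For term $\mathrm{III}$ I would invoke the standard bias--variance expansion for a $q_{\tau}$th-order-kernel Nadaraya--Watson estimator: a Taylor expansion of $\E(Z\cond B_{\tau}^{\T}\bfX=\bfu)f_{B_{\tau}^{\T}\bfX}(\bfu)$ to order $q_{\tau}$ produces the bias $h_{\tau}^{q_{\tau}}\gamma(\bfx)$, and the usual variance computation contributes $\{\int K_{q_{\tau}}^{2}\}^{d_{\tau}}\var(Z\cond B_{\tau}^{\T}\bfX=B_{\tau}^{\T}\bfx)/f_{B_{\tau}^{\T}\bfX}(B_{\tau}^{\T}\bfx)$ to $\sigma_{\tau}^{2}(\bfx)$ at rate $(nh_{\tau}^{d_{\tau}})^{-1/2}$. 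For term $\mathrm{I}$ I would Taylor-expand in $\widehat{B}-B_{\tau}$, use Theorem \ref{thm:B_M} to get $\widehat{B}-B_{\tau}=O_{\bbP}(n^{-1/2})$, and argue that once the leading kernel-rate terms in the numerator and denominator derivatives cancel, the sensitivity of the estimator to the index direction converges to the derivative of the smooth population regression surface rather than blowing up at order $h_{\tau}^{-1}$; hence $\mathrm{I}=O_{\bbP}(n^{-1/2})=o_{\bbP}\{(nh_{\tau}^{d_{\tau}})^{-1/2}\}$ and the estimation of the central mean subspace drops out of the limit. The same reasoning, together with Theorem \ref{thm:B_a} and root-$n$ consistency of $\widehat{B}_a$, lets me replace $\widehat{B}_a$ by $B_a$ inside term $\mathrm{II}$; I would also choose the kernel orders $q_0,q_1$ large enough that the smoothing bias of $\widehat{\mu}_a$ is of smaller order than the retained bias $h_{\tau}^{q_{\tau}}\gamma(\bfx)$.

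The crux is term $\mathrm{II}$. Here I would write $\widehat{D}_i-Z_i=-(2A_i-1)\{\widehat{\mu}_{1-A_i}(B_{1-A_i}^{\T}\bfX_i;B_{1-A_i})-\mu_{1-A_i}(\bfX_i)\}$ and substitute the leading influence-function representation of $\widehat{\mu}_a$, namely a local average of the residuals $\varepsilon_{a,j}$ against $\mathcal{K}_{q,h_a}$. Because $\widehat{\mu}_a$ reuses the full sample, plugging this in turns term $\mathrm{II}$ into a second-order $U$-statistic in $(i,j)$ whose summand carries two kernels. Splitting by the value of $A_i$ yields the treated-for-control and control-for-treated blocks, and the H\'ajek projection isolates a linear term in $\varepsilon_{1,j}$ and $\varepsilon_{0,j}$; evaluating the conditional expectation of the inner kernel weights collapses the double smoothing onto the $B_{\tau}$-direction and attaches the factors $\{1-\pi(\bfX)\}$ to $\varepsilon_1$ and $-\pi(\bfX)$ to $\varepsilon_0$, exactly the terms needed to upgrade $\var(Z\cond\cdot)$ to $\var[Z+\{1-\pi(\bfX)\}\varepsilon_1-\pi(\bfX)\varepsilon_0\cond B_{\tau}^{\T}\bfX=B_{\tau}^{\T}\bfx]$.

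I expect this last step to be the main obstacle. One must show that the H\'ajek projection of the $U$-statistic lives at the same $(nh_{\tau}^{d_{\tau}})^{-1/2}$ rate as term $\mathrm{III}$, so the imputation is genuinely non-negligible; that the complementary degenerate component is of smaller order; and that terms $\mathrm{II}$ and $\mathrm{III}$ are asymptotically linear in the same local observations, so their contributions add inside a single conditional variance rather than combining as independent variances. I would bound the degenerate remainder and the associated maximal fluctuations of the smoothing-indexed summand using the degenerate $U$-process inequalities of \citet{NolanP1987U}, and then establish asymptotic normality of the combined leading linear term by verifying a Lyapunov condition for the resulting triangular array, which delivers the stated limit with mean-zero centering after subtracting $h_{\tau}^{q_{\tau}}\gamma(\bfx)$ and variance $\sigma_{\tau}^{2}(\bfx)$.
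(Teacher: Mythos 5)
Your proposal is correct and follows essentially the same route as the paper: the paper's proof of Theorem 3 performs exactly your Taylor-expansion argument for term $\mathrm{I}$ (using Theorem 2 and the uniform convergence of $\partial_{\vctl(B)}\widehat{\tau}$ in Corollary S1), while its Lemma S1 handles your terms $\mathrm{II}$ and $\mathrm{III}$ jointly, replacing $\widehat{B}_a$ by $B_a$ via root-$n$ consistency, extracting the H\'ajek-projection terms $\{1-\pi(\bfX_i)\}\varepsilon_{1,i}-\pi(\bfX_i)\varepsilon_{0,i}$ attached to the same local weights $\omega_{h,i}(\bfu;B)$ as $Z_i$, and bounding the degenerate U-process remainders with \citet{NolanP1987U}, precisely as you outline. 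The only differences are organizational (the paper proves the influence-function representation uniformly in $(\bfu,B)$ so it can be reused for Theorem 2), so no substantive gap remains in your plan.
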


The proofs of Theorems \ref{thm:B_M}--\ref{thm:tau} are given in
the Supplementary Material. The proof of Theorem \ref{thm:B_M} is
similar to that of Theorem \ref{thm:B_a}. The main difference is
that the outcome contributing to the asymptotic distribution is now
$Z$ instead of the counterfactual $D$. The proof of Theorem \ref{thm:tau}
mainly focuses on approximating the influence function coupled with
the difference between imputed and non-imputed counterfactual outcomes.

\begin{remark} One should note that the asymptotic bias of $\widehat{\mu}_{a}(\bfu;B)$
is not involved in the asymptotic distribution of $\widehat{\tau}(\widehat{B}^{\T}\bfx;\widehat{B})$.
This is an important result of Condition A6, which ensures that the
convergence rate of $\widehat{\mu}_{a}(\bfu;B)-\mu_{a}(\bfu;B)$ is
always faster then that of $\widehat{\tau}(\bfu;B)-\E(Z\cond B^{\T}\bfX=\bfu)$.
\end{remark}

\begin{remark} \label{rmk1}The most important feature of Theorem
\ref{thm:tau} is that the asymptotic variance of $\widehat{B}$ is
not involved in the asymptotic variance of $\widehat{\tau}(\widehat{B}^{\T}\bfx;\widehat{B})$.
More precisely speaking, $\widehat{\tau}(\widehat{B}^{\T}\bfx;\widehat{B})$
has the same asymptotic variance as the one of $\widehat{\tau}({B}_{{\tau}}^{\T}\bfx;{B}_{{\tau}})$.
The reason is that $\Vert\widehat{B}-B_{{\tau}}\Vert=O_{\bbP}(n^{-1/2})$,
which is much faster then the convergence rate $O_{\bbP}[h_{\tau}^{q_{\tau}}+\{\log n/(nh_{\tau}^{d_{{\tau}}})\}^{1/2}]$
of $\widehat{\tau}({B}_{{\tau}}^{\T}\bfx;{B}_{{\tau}})-\tau(\bfx)$.
\end{remark}

Based on Theorem \ref{thm:tau}, we can make inference of $\tau(\bfx)$
by estimating the asymptotic bias and variance. However, in practice,
direct estimates of $\gamma(\bfx)$ and $\sigma_{\tau}^{2}(\bfx)$
are usually unstable, especially when the imputed counterfactual outcomes
are involved. For a pre-specified $q_{\tau}$ that satisfies Condition
A10, we propose a under-smooth strategy such that the asymptotic bias
is dominated by the asymptotic variance. We propose to choose an optimal
bandwidth $h_{\tau,\mathrm{opt}}=O\{n^{-1/(2q_{\tau}+d_{{\tau}})}\}$
by using standard cross-validation criterion and use $h_{\tau}=h_{\tau,\mathrm{opt}}n^{-\delta_{\tau}}$
for some small positive value $\delta_{\tau}$ in the inference procedure.
We then use a bootstrapping method to estimate the asymptotic distribution
of $\widehat{\tau}(\widehat{B}^{\T}\bfx;\widehat{B})-\tau(\bfx)$.

Let $\xi_{i}$ ($i=1,\dots,n$) be independent and identically distributed
from a certain distribution with mean $\mu_{\xi}$ and variance $\sigma_{\xi}^{2}$.
Then $w_{i}=\xi_{i}/\sum_{j=1}^{n}\xi_{j}$ ($i=1,\dots,n$) are exchangeable
random weights. The bootstrapped estimator $\widehat{\tau}^{\ast}(\bfx)$
is calculated as 
\begin{align*}
\widehat{\tau}^{\ast}(\bfx)=\frac{\sum_{j=1}^{n}w_{j}\widehat{D}_{j}^{\ast}\mathcal{K}_{q_{\tau},h_{\tau}}(\widehat{B}^{\T}\bfX_{j}-\widehat{B}^{\T}\bfx)}{\sum_{j=1}^{n}w_{j}\mathcal{K}_{q_{\tau},h_{\tau}}(\widehat{B}^{\T}\bfX_{j}-\widehat{B}^{\T}\bfx)},
\end{align*}
where 
\begin{align*}
 & \widehat{D}_{i}^{\ast}=A_{i}\{Y_{i}-\widehat{\mu}_{0}^{\ast}(\widehat{B}_{0}^{\T}\bfX_{i};\widehat{B}_{0})\}+(1-A_{i})\{\widehat{\mu}_{1}^{\ast}(\widehat{B}_{1}^{\T}\bfX_{i};\widehat{B}_{1})-Y_{i}\},\\
 & \widehat{\mu}_{a}^{\ast}(\bfu;B)=\frac{\sum_{j=1}^{n}w_{j}Y_{j}1(A_{j}=a)\mathcal{K}_{q_{a},h_{a}}(\widehat{B}_{a}^{\T}\bfX_{j}-\bfu)}{\sum_{j=1}^{n}w_{j}1(A_{j}=a)\mathcal{K}_{q_{a},h_{a}}(\widehat{B}_{a}^{\T}\bfX_{j}-\bfu)}\enspace(a=0,1).
\end{align*}
According to Remark \ref{rmk1}, $\widehat{B}$, $\widehat{B}_{0}^{\T}$,
and $\widehat{B}_{1}^{\T}$ require no bootstrapping in the inference,
which highly reduces the computational burden in practice.

The asymptotic variance of $\widehat{\tau}(\widehat{B}^{\T}\bfx;\widehat{B})$
is estimated by $[\mathrm{se}\{\widehat{\tau}^{\ast}(\bfx)\}\mu_{\xi}/\sigma_{\xi}]^{2}$,
where $\mathrm{se}(\cdot)$ denote the standard error of $N$ bootstrapped
estimators. The confidence region of $\tau(\bfx)$ with $1-\alpha$
confidence level can then be constructed as 
\begin{align*}
\widehat{\tau}(\widehat{B}^{\T}\bfx;\widehat{B})\pm\mathcal{Z}_{1-\alpha/2}\mathrm{se}\{\widehat{\tau}^{\ast}(\bfx)\}\frac{\mu_{\xi}}{\sigma_{\xi}},
\end{align*}
where $\mathcal{Z}_{p}$ is the $p$th quantile of the standard normal
distribution.

\section{Simulation\label{sec:simulation}}

\subsection{Data generating processes}

In this section we present a Monte Carlo exercise aimed at evaluating
the finite-sample accuracy of the asymptotic approximations given
in the previous section. 
The covariates $\bfX=(X_{1},\dots,X_{10})$ are generated from independent
and identical $\mathrm{Unif}(-3^{1/2},3^{1/2})$. The propensity score
is $\mathrm{logit}\{\pi(\bfX)\}=0.5(1+X_{1}+X_{2}+X_{3})$. The percentage
of treated is about $60\%$. The potential outcomes are designed as
following two settings: 
\begin{itemize}
\item[M1.] $Y(0)=X_{1}-X_{2}+\varepsilon(0)$ and $Y(1)=2X_{1}+X_{3}+\varepsilon(1)$,
where $\varepsilon(0)$ and $\varepsilon(1)$ independently follow
$N(0,0.02^{2})$. Hence, the conditional average treatment effect
is $\tau(\bfx)=x_{1}+x_{2}+x_{3}$, and the central mean subspace
is $\csp\{(1,1,1,0,\dots,0)^{\T}\}$. 
\item[M2.] $Y(0)=(X_{1}+X_{3})(X_{2}-1)+\varepsilon(0)$ and $Y(1)=2X_{2}(X_{1}+X_{3})+\varepsilon(1)$,
where $\varepsilon(0)$ and $\varepsilon(1)$ independently follow
$N(0,0.02^{2})$. Hence, the conditional average treatment effect
is $\tau(\bfx)=(x_{1}+x_{3})^{2}(x_{2}+1)^{2}$, and the central mean
subspace is $\csp\{(1,0,1,0,\dots,0)^{\T},(0,1,0,\dots,0)^{\T}\}$. 
\end{itemize}
The sample size ranges from $n=250$ and $n=500$. All the results
are based on 1000 replications.

\subsection{Competing estimators and simulation results}

First, we compare the finite-sample performance of the estimated central
mean subspaces using different imputed or adjusted outcomes. In addition
to our proposed $\widehat{D}_{i}$, the nearest neighbor imputation
$\widehat{D}_{\text{MAT},i}$, the inverse weighted outcome $\widehat{D}_{\text{IPW},i}$
as well as $\widehat{D}_{\text{AIPW},i}$, we also consider $\widehat{D}_{\bfX,i}=(2A_{i}-1)\{Y_{i}-\widehat{\mu}_{1-A_{i}}(\bfX_{i};I_{p})\}$,
which is the imputed outcome without any dimension reduction. To compare
the information loss for counterfactual outcomes and prior dimension
reduction, we further perform the dimension reduction based on the
true individual effect $D_{i}$ and the imputed outcome $\widehat{D}_{\text{OR},i}=(2A_{i}-1)\{Y_{i}-\widehat{\mu}_{1-A_{i}}(\bfX_{i};B_{1-A_{i}})\}$
based on true oracle central mean subspaces of the prognostic scores.
The proportions of estimated structural dimension and the mean squared
errors $\Vert\widehat{B}(\widehat{B}^{\T}\widehat{B})^{-1}\widehat{B}^{T}-B_{\tau}(B_{\tau}^{\T}B_{\tau})^{-1}B_{\tau}^{\T}\Vert^{2}$
of the estimated central mean subspaces are displayed in Table \ref{tab:simulation}.
In general, all the proportions of selecting correct structural dimension
tend to one and the mean squared errors tend to zero as sample size
increases. Moreover, our proposed estimator outperforms the others
and is comparable with respect to the simulated estimators based on
$\widehat{D}_{\text{OR},i}$.

Second, we compare the finite-sample performance of the estimated
conditional average treatment effects, which include our proposed
estimator $\widehat{\tau}(\widehat{B}^{\T}\bfx;\widehat{B})$, the
estimator $\widehat{\tau}_{\bfX}(\bfx)$ based on imputed outcome
$\widehat{D}_{\bfX,i}$, the estimator $\widehat{\tau}_{\text{MAT}}(\bfx)$
based on the imputed outcome $\widehat{D}_{\text{MAT},i}$, the estimator
$\widehat{\tau}_{\text{IPW}}(\bfx)$ based on the adjusted outcome
$\widehat{D}_{\text{IPW},i}$, and the estimator $\widehat{\tau}_{\text{AIPW}}(\bfx)$
based on the adjusted outcome $\widehat{D}_{\text{AIPW},i}$. In addition,
we also estimate the conditional average treatment effect by using
the difference of two estimated prognostic scores $\widehat{\tau}_{\text{prog}}(\bfx)=\widehat{\mu}_{1}(\widehat{B}_{1}^{\T}\bfx;\widehat{B}_{1})-\widehat{\mu}_{0}(\widehat{B}_{0}^{\T}\bfx;\widehat{B}_{0})$.
The smoothing estimator $\widehat{\tau}_{0}(\bfx)$ based on $D_{i}$
is also considered as a reference to demonstrate the information loss.
The conditional average treatment effects are evaluated at $\bfx=(0,\dots,0)^{\T}$.
The means, standard deviations, and the mean squared errors are displayed
in Table \ref{tab:simulation_condTE}. In general, our proposed estimator
and the $\widehat{\tau}_{\text{AIPW}}$ have comparable performance,
and both of them outperform the others.

Finally, we construct confidence intervals and inference for the conditional
average treatment effects by using bootstrapping. Here naive bootstrapping
is adopted. That is, $(w_{1},\dots,w_{n})$ follows a multinomial
distribution with number of trials being $n$ and event probabilities
$(1/n,\dots,1/n)$. Table \ref{tab:inference} includes the standard
deviations, bootstrapped standard errors and 95\% quantile intervals
of estimated conditional average treatment effects, as well as the
normal-type 95\% confidence intervals with corresponding coverage
probabilities and quantile-type 95\% confidence intervals with corresponding
coverage probabilities for true conditional average treatment effect.
As expected, the standard errors get close to the standard deviations,
and the coverage probabilities tend to the nominal level when the
sample size gets larger.

\begin{table}
\caption{\label{tab:simulation}The proportions of $\widehat{d}$ and the mean
squared errors (MSE) of $\widehat{B}$ under different model settings,
sample sizes ($n$), and imputation of $D_{i}$}

\vspace{0.25cm}

\centering %
\begin{tabular}{cclcccccc}
\toprule 
 &  &  & \multicolumn{5}{c}{proportions of $\widehat{d}$} & \tabularnewline
\cmidrule{4-8} \cmidrule{5-8} \cmidrule{6-8} \cmidrule{7-8} \cmidrule{8-8} 
model & $n$ &  & 0 & 1 & 2 & 3 & $\geq$4 & MSE\tabularnewline
\midrule
\multirow{14}{*}{M1} & 250 & $\widehat{D}_{i}$ & 0.000 & 0.976 & 0.024 & 0.000 & 0.000 & 0.0293\tabularnewline
 &  & $\widehat{D}_{\bfX,i}$ & 0.000 & 0.716 & 0.246 & 0.037 & 0.001 & 0.5840\tabularnewline
 &  & $\widehat{D}_{\text{MAT},i}$ & 0.000 & 0.833 & 0.148 & 0.018 & 0.001 & 0.2927\tabularnewline
 &  & $\widehat{D}_{\text{IPW},i}$ & 0.000 & 0.680 & 0.229 & 0.087 & 0.004 & 0.7143\tabularnewline
 &  & $\widehat{D}_{\text{AIPW},i}$ & 0.000 & 0.955 & 0.045 & 0.000 & 0.000 & 0.0555\tabularnewline
 &  & $D_{i}$ & 0.000 & 0.999 & 0.001 & 0.000 & 0.000 & 0.0013\tabularnewline
 &  & $\widehat{D}_{\text{OR},i}$ & 0.000 & 0.979 & 0.021 & 0.000 & 0.000 & 0.0267\tabularnewline
\cmidrule{2-9} \cmidrule{3-9} \cmidrule{4-9} \cmidrule{5-9} \cmidrule{6-9} \cmidrule{7-9} \cmidrule{8-9} \cmidrule{9-9} 
 & 500 & $\widehat{D}_{i}$ & 0.000 & 0.985 & 0.015 & 0.000 & 0.00 & 0.0171\tabularnewline
 &  & $\widehat{D}_{\bfX,i}$ & 0.000 & 0.676 & 0.295 & 0.029 & 0.00 & 0.5392\tabularnewline
 &  & $\widehat{D}_{\text{MAT},i}$ & 0.000 & 0.897 & 0.097 & 0.006 & 0.00 & 0.1588\tabularnewline
 &  & $\widehat{D}_{\text{IPW},i}$ & 0.000 & 0.615 & 0.256 & 0.119 & 0.01 & 0.6744\tabularnewline
 &  & $\widehat{D}_{\text{AIPW},i}$ & 0.000 & 0.980 & 0.020 & 0.000 & 0.00 & 0.0236\tabularnewline
 &  & $D_{i}$ & 0.000 & 0.999 & 0.001 & 0.000 & 0.00 & 0.0012\tabularnewline
 &  & $\widehat{D}_{\text{OR},i}$ & 0.000 & 0.985 & 0.015 & 0.000 & 0.00 & 0.0171\tabularnewline
\midrule
\multirow{14}{*}{M2} & 250 & $\widehat{D}_{i}$ & 0.000 & 0.000 & 0.995 & 0.005 & 0.000 & 0.0237\tabularnewline
 &  & $\widehat{D}_{\bfX,i}$ & 0.000 & 0.062 & 0.883 & 0.053 & 0.002 & 0.3222\tabularnewline
 &  & $\widehat{D}_{\text{MAT},i}$ & 0.000 & 0.050 & 0.894 & 0.052 & 0.004 & 0.3608\tabularnewline
 &  & $\widehat{D}_{\text{IPW},i}$ & 0.000 & 0.269 & 0.610 & 0.110 & 0.011 & 0.9581\tabularnewline
 &  & $\widehat{D}_{\text{AIPW},i}$ & 0.000 & 0.008 & 0.978 & 0.014 & 0.000 & 0.0616\tabularnewline
 &  & $D_{i}$ & 0.000 & 0.000 & 0.995 & 0.005 & 0.000 & 0.0119\tabularnewline
 &  & $\widehat{D}_{\text{OR},i}$ & 0.000 & 0.003 & 0.992 & 0.004 & 0.001 & 0.0243\tabularnewline
\cmidrule{2-9} \cmidrule{3-9} \cmidrule{4-9} \cmidrule{5-9} \cmidrule{6-9} \cmidrule{7-9} \cmidrule{8-9} \cmidrule{9-9} 
 & 500 & $\widehat{D}_{i}$ & 0.000 & 0.000 & 0.997 & 0.003 & 0.000 & 0.0139\tabularnewline
 &  & $\widehat{D}_{\bfX,i}$ & 0.000 & 0.008 & 0.955 & 0.035 & 0.002 & 0.1858\tabularnewline
 &  & $\widehat{D}_{\text{MAT},i}$ & 0.000 & 0.013 & 0.963 & 0.021 & 0.003 & 0.2040\tabularnewline
 &  & $\widehat{D}_{\text{IPW},i}$ & 0.000 & 0.165 & 0.714 & 0.109 & 0.012 & 0.7532\tabularnewline
 &  & $\widehat{D}_{\text{AIPW},i}$ & 0.000 & 0.001 & 0.995 & 0.004 & 0.000 & 0.0224\tabularnewline
 &  & $\widehat{D}_{\text{OR},i}$ & 0.000 & 0.000 & 1.000 & 0.000 & 0.000 & 0.0090\tabularnewline
 &  & $D_{i}$ & 0.000 & 0.000 & 1.000 & 0.000 & 0.000 & 0.0027\tabularnewline
\bottomrule
\end{tabular}
\end{table}

\begin{table}
\caption{\label{tab:simulation_condTE}The mean squared errors of estimated
conditional average treatment effects under different model settings
and sample sizes ($n$)}

\vspace{0.25cm}

\centering \resizebox{\textwidth}{!}{%
\begin{tabular}{cccccccccc}
\toprule 
model & $n$ &  & $\widehat{\tau}(\widehat{B}^{\T}\bfx;\widehat{B})$ & $\widehat{\tau}_{\bfX}(\bfx)$ & $\widehat{\tau}_{\text{MAT}}(\bfx)$ & $\widehat{\tau}_{\text{IPW}}(\bfx)$ & $\widehat{\tau}_{\text{AIPW}}(\bfx)$ & $\widehat{\tau}_{\text{prog}}(\bfx)$ & $\widehat{\tau}_{0}(\bfx)$\tabularnewline
\midrule
\multirow{6}{*}{M1} & 250 & mean & 0.003 & -0.025 & 0.094 & 0.008 & 0.002 & 0.003 & -0.000\tabularnewline
 &  & s.d. & 0.0493 & 0.2203 & 0.2325 & 0.5903 & 0.0532 & 0.0545 & 0.0258\tabularnewline
 &  & MSE & 0.0024 & 0.0492 & 0.0629 & 0.3485 & 0.0028 & 0.0030 & 0.0007\tabularnewline
\cmidrule{2-10} \cmidrule{3-10} \cmidrule{4-10} \cmidrule{5-10} \cmidrule{6-10} \cmidrule{7-10} \cmidrule{8-10} \cmidrule{9-10} \cmidrule{10-10} 
 & 500 & mean & -0.000 & 0.006 & 0.065 & -0.005 & -0.000 & 0.003 & -0.001\tabularnewline
 &  & s.d. & 0.0300 & 0.1474 & 0.1417 & 0.3642 & 0.0311 & 0.0310 & 0.0159\tabularnewline
 &  & MSE & 0.0009 & 0.0218 & 0.0243 & 0.1327 & 0.0010 & 0.0010 & 0.0003\tabularnewline
\midrule
\multirow{6}{*}{M2} & 250 & mean & -0.029 & -0.091 & -0.180 & -0.035 & -0.007 & -0.048 & 0.001\tabularnewline
 &  & s.d. & 0.1006 & 0.2072 & 0.3103 & 0.3803 & 0.1074 & 0.1399 & 0.0639\tabularnewline
 &  & MSE & 0.0110 & 0.0512 & 0.1288 & 0.1459 & 0.0116 & 0.0219 & 0.0041\tabularnewline
\cmidrule{2-10} \cmidrule{3-10} \cmidrule{4-10} \cmidrule{5-10} \cmidrule{6-10} \cmidrule{7-10} \cmidrule{8-10} \cmidrule{9-10} \cmidrule{10-10} 
 & 500 & mean & -0.015 & -0.104 & -0.157 & -0.010 & -0.002 & -0.024 & 0.001\tabularnewline
 &  & s.d. & 0.0651 & 0.1418 & 0.2024 & 0.2463 & 0.0566 & 0.0926 & 0.0410\tabularnewline
 &  & MSE & 0.0045 & 0.0309 & 0.0655 & 0.0607 & 0.0032 & 0.0092 & 0.0017\tabularnewline
\bottomrule
\end{tabular}}
\end{table}

\begin{table}
\caption{\label{tab:inference}The standard deviations (s.d.), bootstrapped
standard errors (s.e.), and 95\% quantile intervals (Q.I.) of estimated
conditional average treatment effects, and normal-type 95\% confidence
intervals (N.C.I.) with corresponding coverage probabilities (N.C.P.)
and quantile-type 95\% confidence intervals (Q.C.I.) with corresponding
coverage probabilities (Q.C.P.) for true conditional treatment treatment
effect}
\centering

\vspace{0.25cm}

\resizebox{\textwidth}{!}{%
\begin{tabular}{ccccccccc}
\toprule 
model & $n$ & s.d. & s.e. & Q.I. & N.C.I & N.C.P. & Q.C.I. & Q.C.P.\tabularnewline
\midrule
\multirow{2}{*}{M1} & 250 & 0.0493 & 0.0621 & (-0.095,0.107) & (-0.119,0.125) & 0.966 & (-0.119,0.124) & 0.975\tabularnewline
 & 500 & 0.0300 & 0.0365 & (-0.066,0.062) & (-0.072,0.071) & 0.965 & (-0.074,0.067) & 0.972\tabularnewline
\midrule
\multirow{2}{*}{M2} & 250 & 0.1006 & 0.0998 & (-0.226,0.159) & (-0.225,0.166) & 0.944 & (-0.224,0.167) & 0.921\tabularnewline
 & 500 & 0.0651 & 0.0645 & (-0.132,0.109) & (-0.142,0.111) & 0.951 & (-0.140,0.112) & 0.937\tabularnewline
\bottomrule
\end{tabular}}
\end{table}

\section{Empirical examples\label{sec:data}}

\subsection{The effect of maternal smoking on birth weight}

We apply our proposed method to two existing datasets to estimate
the effect of maternal smoking on birth weight conditional on different
levels of confounders. In the literature, many studies documented
that mother's health, educational and labor market status have important
effects on child birth weight \citep{CurrieA2011Human}. In particular,
maternal smoking is considered as the most important preventable negative
cause \citep{Kramer1987Intrauterine}. \citet{LeeOW2017Doubly} studied
the conditional average treatment effect of smoking given mother's
age. In this work, our goal is to fully characterize the conditional
average treatment effect of smoking on child birth weight given a
vector of important confounding variables while maintaining the interpretability.

\subsection{Pennsylvania data}

The first dataset consists of observations collected in 2002 from
mothers in Pennsylvania in the U.S.A. available from the STATA website
(http://www.stata-press.com/data/r13/cattaneo2.dta). Following \citet{LeeOW2017Doubly},
we focus on white and non-Hispanic mothers, leading to the sample
size 3754. The outcome $Y$ of interest is infant birth weight measured
in grams. The treatment variable $A$ is equal to 1 if the mother
is a smoker and 0 otherwise. The set of covariates $\bfX$ includes
the number of prenatal care visits, mother's educational attainment,
age, an indicator for the first baby, an indicator for alcohol consumption
during pregnancy, an indicator for the first prenatal visit in the
first trimester, and an indicator for whether there was a previous
birth where the newborn died. The continuous covariates are centralized
and standardized.

The estimated central mean subspace has dimension one. The coefficients
of estimated linear index and corresponding standard errors are displayed
in Table \ref{tab:data_index}. 
Figure \ref{fig:CATE_index} shows the estimated conditional average
treatment effect at different levels of linear index values along
with corresponding normal-type confidence intervals. In general, smoking
has significant effects on low birth weights, as detected in the existing
studies. In particular, this effect decreases when the linear index
value increases. Interestingly, the larger number of prenatal care
visits and the first baby lead to significantly smaller effects than
other confounding variables. This result shows that more frequent
prenatal care visits and whether it is a first pregnancy mitigate
the effect of smoking on low birth weights.

\subsection{North Carolina data}

The second dataset is based on the records between 1988 and 2002 by
the North Carolina Center Health Services. The dataset was analyzed
by \citet{AbrevayaHL2015Estimating} and can be downloaded from Prof.
Leili's website. To make a comparison with the Pennsylvania data,
we focus on white and first-time mothers and form a random sub-sample
with sample size $n=3754$ among the subjects collected in 2002. The
outcome $Y$ and the treatment variable $A$ remain the same as for
the Pennsylvania data. The set of covariates includes those used in
the analysis of Pennsylvania data but the indicator for the first
baby and the indicator for whether there was a previous birth where
the newborn died. Besides, it includes indicators for gestational
diabetes, hypertension, amniocentesis, and ultrasound exams.

The estimated central mean subspace has dimension one. The coefficients
of estimated linear index and corresponding standard errors are also
displayed in Table \ref{tab:data_index}. Figure \ref{fig:CATE_index}
shows the estimated conditional average treatment effect at different
levels of linear index values along with corresponding normal-type
confidence intervals. Similar to the results from the Pennsylvania
data, smoking has significant effects on low birth weights. Differently,
this effect decreases when the level of estimated linear index values
decreases. In particular, lower values of mothers educational attainment,
higher values of mothers age, the absence of hypertension, and the
amniocentesis significantly lead to larger treatment effects. This
result shows that mother's education attainment, age, and health status
are also important modifying factors of smoking on low birth weight.

\begin{table}
\caption{\label{tab:data_index}The estimated coefficients of linear indices
and corresponding standard errors (s.e.) for the Pennsylvania and
North Carolina data.}

\vspace{0.25cm}

\centering %
\begin{tabular}{cccrrrrr}
\toprule 
 &  &  & \multicolumn{2}{c}{Pennsylvania data} &  & \multicolumn{2}{c}{North Carolina data}\tabularnewline
\cmidrule{4-5} \cmidrule{5-5} \cmidrule{7-8} \cmidrule{8-8} 
\multicolumn{2}{c}{covariate } &  & coefficient  & s.e.  &  & coefficient  & s.e.\tabularnewline
\midrule
$X_{1}$  & prenatal visit number  &  & -0.668  & 0.0645  &  & 0.043  & 0.0719\tabularnewline
$X_{2}$  & education  &  & -0.059  & 0.2101  &  & -0.271  & 0.0477\tabularnewline
$X_{3}$  & age  &  & -0.210  & 0.3076  &  & 0.243  & 0.0485\tabularnewline
$X_{4}$  & first baby  &  & 1  &  &  &  & \tabularnewline
$X_{5}$  & alcohol  &  & 0.142  & 0.6103  &  & -0.101  & 0.2122\tabularnewline
$X_{6}$  & first prenatal visit  &  & 0.275  & 0.3224  &  & -0.104  & 0.1556\tabularnewline
$X_{7}$  & previous newborn death  &  & 0.169  & 0.1257  &  &  & \tabularnewline
$X_{8}$  & diabetes  &  &  &  &  & -0.129  & 0.1268\tabularnewline
$X_{9}$  & hypertension  &  &  &  &  & -0.333  & 0.1084\tabularnewline
$X_{10}$  & amniocentesis  &  &  &  &  & 1  & \tabularnewline
$X_{11}$  & ultrasound  &  &  &  &  & -0.006  & 0.1612\tabularnewline
\bottomrule
\end{tabular}
\end{table}

\begin{figure}
\caption{\label{fig:CATE_index}The estimated conditional average treatment
effects at different levels of linear index values with corresponding
confidence intervals.}

\vspace{0.25cm}

\centering \includegraphics[scale=0.4]{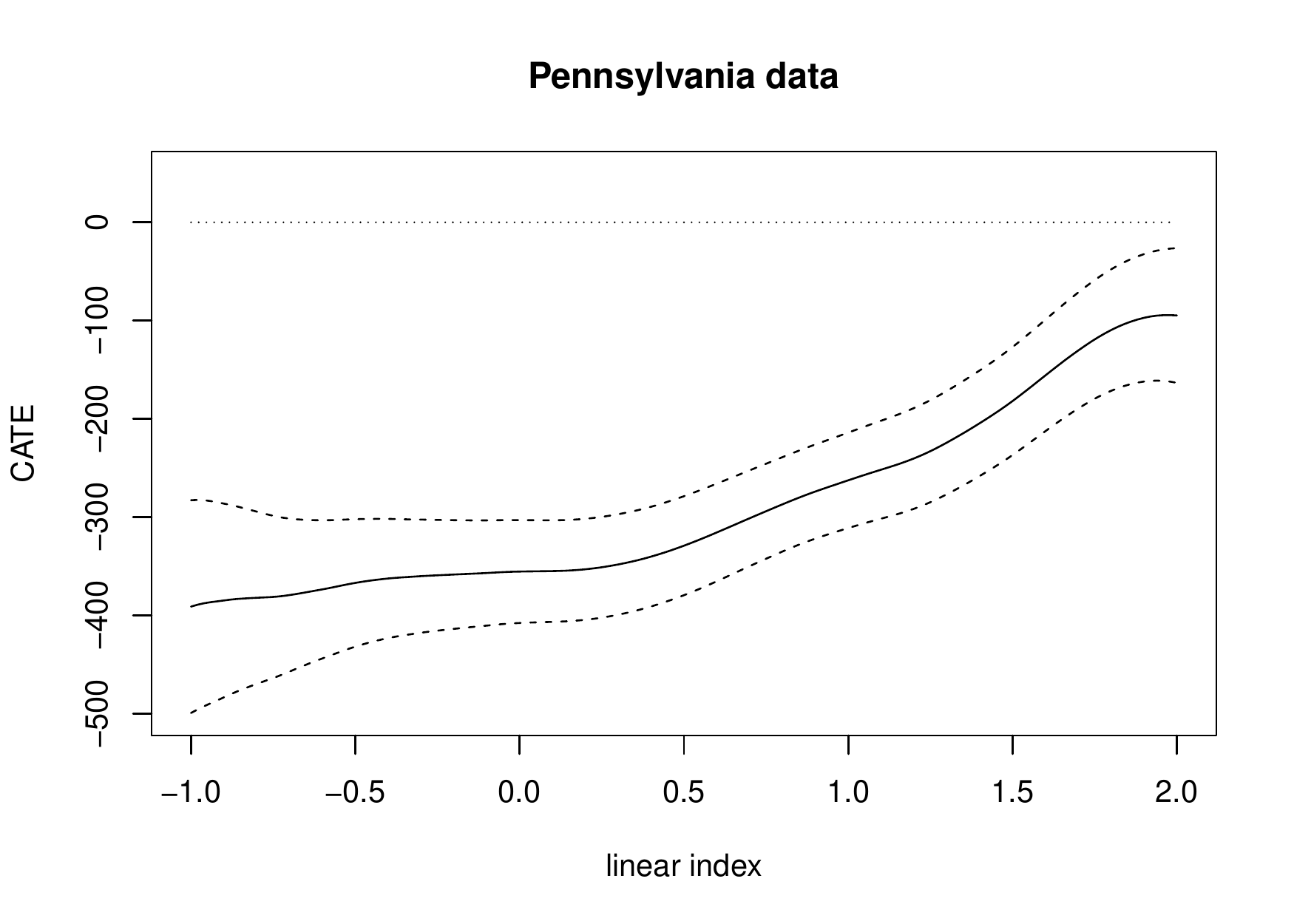} \includegraphics[scale=0.4]{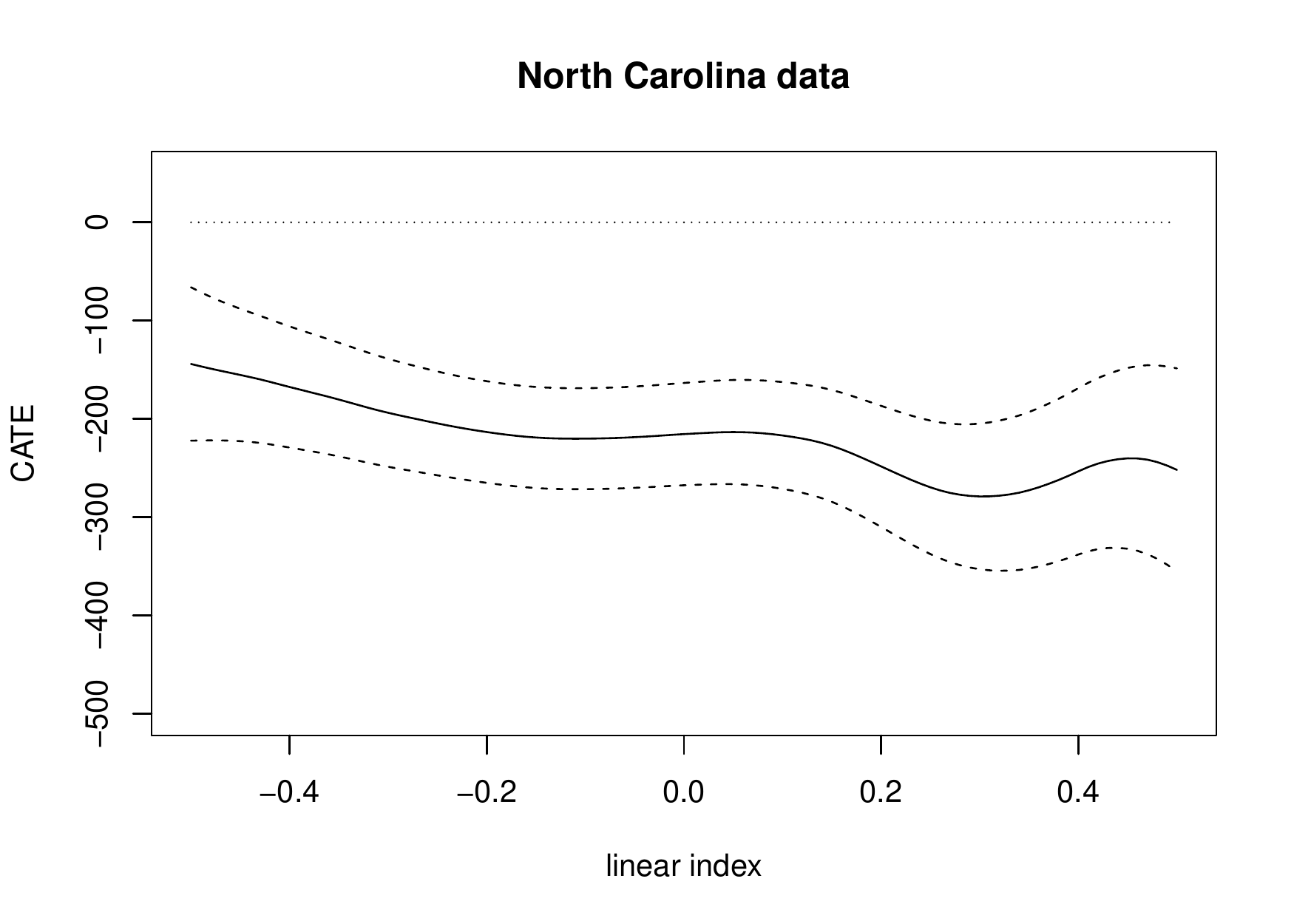} 
\end{figure}

\section{Discussion\label{sec:discuss}}


The proposed framework of robust inference of conditional average
treatment effect can be generalized in the following directions. First,
we use under-smoothing to avoid the asymptotic bias of the conditional
average treatment effect estimator. Without under-smoothing, the asymptotic
bias is not negligible but may be estimated empirically as in \citet{ChenC2019Nonparametric}.
We will investigate the possibility of a bias-corrected estimator
in the future. Second, we can extend to estimate the conditional average
treatment effect with continuous treatment. In this case, the first-stage
dimension reduction applies to the potential outcomes for a given
treatment level and a reference treatment level, and the second-stage
searches the central space for the contrast between the two prognostic
scores under the two levels. Third, the first-stage dimension reduction
is not confined to the central mean space but can be applied to a
transformation of the outcome $g\{Y(a)\}$ for any function $g(\cdot)$.
This allows the estimation of the general type of conditional treatment
effects such as conditional distribution or quantile treatment effects.
Similar to the main paper, we can also derive robust estimators for
these causal estimands.

\section*{Acknowledgment }

Dr. Huang is partially supported by MOST grant 108-2118-M-001-011-MY2.
Dr. Yang is partially supported by NSF grant DMS 1811245, NCI grant
P01 CA142538, NIA grant 1R01AG066883, and NIEHS grant 1R01ES031651.

\section*{Supplementary material\label{SM}}

Supplementary Material includes additional notation and the regularity
conditions and the proofs of Theorems \ref{thm:B_M}--\ref{thm:tau}.

\bibliographystyle{dcu}
\bibliography{MYH}

\pagebreak{}

\global\long\def\theequation{S\arabic{equation}}%
\setcounter{equation}{0}

\global\long\def\thelemma{S\arabic{lemma}}%
\setcounter{lemma}{0}

\global\long\def\thesection{S\arabic{section}}%
\setcounter{section}{0}

\global\long\def\thetheorem{S\arabic{theorem}}%
\setcounter{theorem}{0}

\global\long\def\thecondition{S\arabic{condition}}%
\setcounter{condition}{0}

\global\long\def\theremark{S\arabic{remark}}%
\setcounter{remark}{0}

\global\long\def\thestep{S\arabic{step}}%
\setcounter{step}{0}

\global\long\def\theassumption{S\arabic{assumption}}%
\setcounter{assumption}{0}

\global\long\def\theproposition{S\arabic{proposition}}%
\setcounter{equation}{0}

\global\long\def\thetable{S\arabic{table}}%
\setcounter{table}{0} 

\global\long\def\thepage{S\arabic{page}}%
\setcounter{page}{0} 
\begin{center}
\textbf{\huge{}{}{}{}{}{}{}{}{}{}Supplementary material for
``Robust inference of conditional average treatment effects using
dimension reduction'' }{\huge{}{}{}{}{}{}{}{}{}}\\
 {\huge{}{}{}{}{}{}{}{}{} }\textbf{\huge{}{}{}{}{}{}{}{}{}
\bigskip{}
 }\textbf{\large{}{}{}{}{}{}{}{}{}{}by Huang and Yang}{\large\par}
\par\end{center}

\begin{center}
 
\par\end{center}

\bigskip{}

\setcounter{page}{1}

$\mathsection$ \ref{sec:Additional-Notation-and} presents additional
notation and regularity conditions; $\mathsection$ \ref{sec:Preliminary-Lemmas}
establishes some preliminary lemmas; and $\mathsection$ \ref{sec:Proofs-of-Theorems}
provides the proofs of Theorems 2 and 3.

\section{Additional Notation and Regularity Conditions\label{sec:Additional-Notation-and}}

Let $(\cdot)^{\otimes}$ denote the Kronecker power of a vector and
let $\Vert\cdot\Vert$ represent the Frobenius norm of a matrix. Denote
$f_{B^{\T}\bfX}(\bfu)$ as the marginal density of $B^{\T}\bfX$,
\begin{align*}
 & f^{[m]}(\bfx,\bfu;B)=\partial_{\bfu}^{m}[\E\{(\bfX_{l}-\bfx_{l})^{\otimes m}\cond B^{\T}\bfX=\bfu\}f_{B^{\T}\bfX}(\bfu)],\\
 & E_{a}^{[m]}(\bfx,\bfu;B)=\partial_{\bfu}^{m}[\pr(A=a\cond B^{\T}\bfX=\bfu)\E\{(\bfX_{l}-\bfx_{l})^{\otimes m}\cond B^{\T}\bfX=\bfu\}f_{B^{\T}\bfX}(\bfu)],\\
 & F_{a}^{[m]}(\bfx,\bfu;B)=\partial_{\bfu}^{m}[\E\{Y1(A=a)\cond B^{\T}\bfX=\bfu\}\E\{(\bfX_{l}-\bfx_{l})^{\otimes m}\cond B^{\T}\bfX=\bfu\}f_{B^{\T}\bfX}(\bfu)],\\
 & G^{[m]}(\bfx,\bfu;B)=\partial_{\bfu}^{m}[\E(Z\cond B^{\T}\bfX=\bfu)\E\{(\bfX_{l}-\bfx_{l})^{\otimes m}\cond B^{\T}\bfX=\bfu\}f_{B^{\T}\bfX}(\bfu)],\enspace(a=0,1,\enspace m=0,1,2),
\end{align*}
where $Z=(2A-1)\{Y-\mu_{1-A}(B_{1-A}^{\T}\bfX;B_{1-A})\}$. We will
show that 
\[
\partial_{\vctl(B)}^{m}\widehat{\mu}_{a}(B^{\T}\bfx;B)\rightarrow\mu^{[m]}(\bfx;B)=\sum_{\ell=0}^{m}{m \choose \ell}F_{a}^{[\ell]}(\bfx,B^{\T}\bfx;B)E_{a,\mathrm{inv}}^{[m-\ell]}(\bfx,B^{\T}\bfx;B),
\]
and 
\[
\partial_{\vctl(B)}^{m}\widehat{\tau}(B^{\T}\bfx;B)\rightarrow\tau^{[m]}(\bfx;B)=\sum_{\ell=0}^{m}{m \choose \ell}G^{[\ell]}(\bfx,B^{\T}\bfx;B)f_{\mathrm{inv}}^{[m-\ell]}(\bfx,B^{\T}\bfx;B),
\]
uniformly as $n\rightarrow\infty,$ where 
\begin{align*}
 & f_{\mathrm{inv}}^{[0]}(\bfx,\bfu;B)={1}/{f_{B^{\T}\bfX}(\bfu)}, &  & E_{a,\mathrm{inv}}^{[0]}(\bfx,\bfu;B)={1}/{E_{a}^{[0]}(\bfx,\bfu;B)},\\
 & f_{\mathrm{inv}}^{[1]}(\bfx,\bfu;B)=-\frac{f^{[1]}(\bfx,\bfu;B)}{f_{B^{\T}\bfX}^{2}(\bfu)},\enspace &  & f_{\mathrm{inv}}^{[2]}(\bfx,\bfu;B)=\frac{2\{f^{[1]}(\bfx,\bfu;B)\}^{2}}{f_{B^{\T}\bfX}^{3}(\bfu)}-\frac{f^{[2]}(\bfx,\bfu;B)}{f_{B^{\T}\bfX}^{2}(\bfu)},\\
 & E_{a,\mathrm{inv}}^{[1]}(\bfx,\bfu;B)=-\frac{E_{a}^{[1]}(\bfx,\bfu;B)}{\{E_{a}^{[0]}(\bfx,\bfu;B)\}^{2}}, &  & \enspace E_{a,\mathrm{inv}}^{[2]}(\bfx,\bfu;B)=\frac{2\{E_{a}^{[1]}(\bfx,\bfu;B)\}^{2}}{E_{a}^{[0]}(\bfx,\bfu;B)}-\frac{E_{a}^{[2]}(\bfx,\bfu;B)}{\{E_{a}^{[0]}(\bfx,\bfu;B)\}^{2}}.
\end{align*}
According to the notation, we can define the corresponding score vectors
and information matrices of $\textsc{cv}_{a}(d,B,h)$ and $\textsc{cv}(d,B,h)$:
\begin{align*}
 & S_{a}(B)=-1(A=a)\{Y-\mu_{a}(B^{\T}\bfX;B)\}\mu^{[1]}(\bfX;B),\\
 & V_{a}(B)=\E(1(A=a)[\{\mu^{[1]}(\bfX;B)\}^{\otimes2}-\{Y-\mu_{a}(B^{\T}\bfX;B)\}\mu^{[2]}(\bfX;B)]),\\
 & S(B)=-\{Z-\E(Z\cond B^{\T}\bfX)\}\tau^{[1]}(\bfX;B),\\
 & V(B)=\E[\{\tau^{[1]}(\bfX;B)\}^{\otimes2}-\{Z-\E(Z\cond B^{\T}\bfX)\}\tau^{[2]}(\bfX;B)].
\end{align*}
In addition, let $B_{d,a}$ be the minimizer of $b_{a}^{2}(B)=\E[\{\mu_{a}(B^{\T}\bfX;B)-\mu(\bfX)\}^{2}]$
and let $B_{d,{\tau}}$ be the minimizer of $b_{{\tau}}^{2}(B)=\E[\{\E(Z\cond B^{\T}\bfX)-\tau(\bfX)\}^{2}]$
over all $p\times d$ matrices $B$. Then, $b_{a}^{2}(B)\to b_{a}^{2}(B_{d,a})$
implies $B\to B_{d,a}$ for $\csp(B)\nsupseteq\csp(B_{a})$, and $b_{{\tau}}^{2}(B)\to b_{{\tau}}^{2}(B_{d,{\tau}})$
implies $B\to B_{d,{\tau}}$ for $\csp(B)\nsupseteq\csp(B_{{\tau}})$.
The following regularity conditions are imposed for our theorems: 
\begin{itemize}
\item[A1] $\partial_{\bfu}^{q+m}\E\{(\bfX_{l}-\bfx_{l})^{\otimes m}\cond B^{\T}\bfX=\bfu\}$,
$\partial_{\bfu}^{q+2}f_{B^{\T}\bfX}(\bfu)$, $\partial_{\bfu}^{q+2}\pr(A=a\cond B^{\T}\bfX=\bfu)$,
$\partial_{\bfu}^{q+2}\E\{Y1{A=a}\cond B^{\T}\bfX=\bfu\}$, and $\partial_{\bfu}^{q+2}\E(Z\cond B^{\T}\bfX=\bfu)$
($a=0,1$, $m=1,2$), are Lipschitz continuous in $\bfu$ with the
Lipschitz constants being independent of $(\bfx,B)$. 
\item[A2] $\inf_{(\bfx,B)}f_{B^{\T}\bfX}(B^{\T}\bfx)>0$ and $\inf_{(\bfx,B)}\pr(A=a\cond B^{\T}\bfX=B^{\T}\bfx)>0$
($a=0,1$). 
\item[A3] For each working dimension $d>0$, $h$ falls in the interval $H_{\delta,n}=[h_{l}n^{-\delta},h_{u}n^{-\delta}]$
for some positive constants $h_{l}$ and $h_{u}$ and $\delta\in(1/(4q),1/\max\{2d+2,d+4\})$.
In particular, this requires $q>\max(d/2+1,2)$. 
\item[A4] $\inf_{\{B:d<d_{a}\}}b_{a}^{2}(B)>0$ and $b_{a}^{2}(B)=0$ if and
only if $B=B_{a}$ when $d=d_{a}$ ($a=0,1$). 
\item[A5] $V_{a}(B_{d,a})$ is non-singular for $d\geq d_{a}$ ($a=0,1$). 
\item[A6] For each working dimension $d$, $q_{a}>qd_{a}/d$ ($a=0,1$). 
\item[A7] $\inf_{\{B:d<d_{{\tau}}\}}b_{{\tau}}^{2}(B)>0$ and $b_{{\tau}}^{2}(B)=0$
if and only if $B=B_{{\tau}}$ when $d=d_{{\tau}}$. 
\item[A8] $V(B_{d,{\tau}})$ is non-singular for $d\geq d_{{\tau}}$. 
\item[A9] $h_{\tau}\to0$ and $nh_{\tau}^{d_{{\tau}}}\to\infty$. 
\item[A10] For each working dimension $d$, $q_{\tau}>qd_{\tau}/d$. 
\end{itemize}
Conditions A1--A2 are the smoothness and boundedness conditions for
the population functions to ensure the uniform convergence of kernel
estimators. Moreover, to remove the remainder terms in the approximation
of $\textsc{cv}(d,B,h)$ and $\textsc{cv}(d,B,h)$ to their target
functions, the constraints for the orders of kernel functions and
the bandwidths are drawn in Conditions A3 and A6. Conditions A4--A5
and A7--A8 ensure the identifiability of $B_{a}$ ($a=0,1$) and
$B_{{\tau}}$, respectively. The requirements of $h_{\tau}$ and $q_{\tau}$
used in $\widehat{\tau}(\widehat{B}^{\T}\bfx;\widehat{B})$ are given
in Condition A9--A10.

\section{Preliminary Lemmas\label{sec:Preliminary-Lemmas}}

The proofs of the main theorems rely on the following lemma:

\begin{lemma}\label{lem:tau_Dhat} Suppose that Assumption 1 and
Conditions A1--A6 are satisfied. Then, 
\begin{multline*}
\widehat{\tau}(\bfu;B)-\E(Z\cond B^{\T}\bfX=\bfu)\\
=\enspace\frac{1}{n}\sum_{i=1}^{n}[Z_{i}-\E(Z\cond B^{\T}\bfX=\bfu)+\{1-\pi(\bfX_{i})\}\varepsilon_{1,i}-\pi(\bfX_{i})\varepsilon_{0,i}]\omega_{h,i}(\bfu;B)+r_{n}(\bfu;B),
\end{multline*}
where $\varepsilon_{a,i}=\{Y_{i}-\mu_{a}(\bfX_{i})\}1(A_{i}=a),\ (a=0,1),$
$\omega_{h,i}(\bfu;B)=\mathcal{K}_{q,h}(B^{\T}\bfX_{i}-\bfu)/\sum_{j=1}^{n}\mathcal{K}_{q,h}(B^{\T}\bfX_{j}-\bfu),$
and $\sup_{(\bfu,B)}\vert r_{n}(\bfu,B)\vert=o_{\bbP}[h^{q}+\{\log n/(nh^{d})\}^{1/2}].$
\end{lemma}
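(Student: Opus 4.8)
The plan is to decompose the estimator $\widehat{\tau}(\bfu;B)$ so as to isolate the contribution of the true $Z_i$ from the error committed by using the estimated prognostic scores $\widehat{\mu}_a$ in place of $\mu_a$. Recall that $\widehat{D}_i = A_i\{Y_i - \widehat{\mu}_0(\widehat{B}_0^{\T}\bfX_i;\widehat{B}_0)\} + (1-A_i)\{\widehat{\mu}_1(\widehat{B}_1^{\T}\bfX_i;\widehat{B}_1) - Y_i\}$, and that its population analogue built from the true $\mu_a$ is precisely $Z_i = (2A_i-1)\{Y_i - \mu_{1-A_i}(B_{1-A_i}^{\T}\bfX_i;B_{1-A_i})\}$. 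Writing $\omega_{h,i}(\bfu;B)$ for the Nadaraya--Watson weights, we have $\widehat{\tau}(\bfu;B) = n^{-1}\sum_i \widehat{D}_i\,\omega_{h,i}(\bfu;B)$. First I would split $\widehat{D}_i = Z_i + (\widehat{D}_i - Z_i)$, so that
\begin{align*}
\widehat{\tau}(\bfu;B) - \E(Z\cond B^{\T}\bfX=\bfu) = \frac{1}{n}\sum_{i=1}^n \{Z_i - \E(Z\cond B^{\T}\bfX=\bfu)\}\omega_{h,i}(\bfu;B) + \frac{1}{n}\sum_{i=1}^n (\widehat{D}_i - Z_i)\omega_{h,i}(\bfu;B).
\end{align*}
The first term on the right is exactly the $Z_i - \E(Z\cond B^{\T}\bfX=\bfu)$ piece appearing in the statement; standard kernel-smoothing arguments under Conditions A1--A3 give its uniform expansion with the claimed remainder order $o_{\bbP}[h^q + \{\log n/(nh^d)\}^{1/2}]$. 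The real work is the second term.

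For the imputation-error term, the key step is to Taylor-expand $\widehat{\mu}_a(\widehat{B}_a^{\T}\bfX_i;\widehat{B}_a) - \mu_a(\bfX_i)$ around the true subspace and the true regression function. Because $\mu_a(\bfx) = \mu_a(B_a^{\T}\bfx;B_a)$ under Assumption~1, the difference $\widehat{D}_i - Z_i$ equals $-(2A_i-1)\{\widehat{\mu}_{1-A_i}(\widehat{B}_{1-A_i}^{\T}\bfX_i;\widehat{B}_{1-A_i}) - \mu_{1-A_i}(\bfX_i)\}$. I would handle this using the known influence-function representation of the kernel regression estimator $\widehat{\mu}_a$: to leading order, $\widehat{\mu}_a(B_a^{\T}\bfX_i;B_a) - \mu_a(\bfX_i)$ is an empirical average of the kernel-weighted residuals $\varepsilon_{a,j} = \{Y_j - \mu_a(\bfX_j)\}1(A_j=a)$ centered at $B_a^{\T}\bfX_i$. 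Plugging this back and interchanging the two sums, the double average $n^{-1}\sum_i(\widehat{D}_i-Z_i)\omega_{h,i}(\bfu;B)$ reorganizes into a weighted empirical average of the $\varepsilon_{a,j}$ in which the weight attached to unit $j$ is, after taking conditional expectations over the smoothing in the inner $\widehat{\mu}_a$, the factor $\{1-\pi(\bfX_j)\}$ for $a=1$ and $-\pi(\bfX_j)$ for $a=0$. This is precisely the $\{1-\pi(\bfX_i)\}\varepsilon_{1,i} - \pi(\bfX_i)\varepsilon_{0,i}$ term in the statement, and it is the mechanism by which the imputation step leaves a non-negligible mark on the asymptotics (cf.\ the remark following Theorem~\ref{thm:tau}).

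The main obstacle, and the place where Condition~A6 is essential, is controlling the cross and higher-order terms generated by the double smoothing: the bias of the inner estimator $\widehat{\mu}_a$ (of order $h_a^{q_a}$) and the error from replacing $\widehat{B}_a$ by $B_a$ (of order $n^{-1/2}$ by Theorem~\ref{thm:B_a}) must both be shown to feed only into the remainder $r_n(\bfu;B)$, not into the leading expansion. Condition~A6, which forces $q_a > q d_a/d$, guarantees that the inner bandwidth is undersmoothed relative to the outer one so that $h_a^{q_a} = o(h^q)$ uniformly; the root-$n$ consistency of $\widehat{B}_a$ makes the subspace-estimation error asymptotically negligible against the nonparametric rate. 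I would also need the kernel's bounded support and the boundedness conditions A2 to obtain the uniform-in-$(\bfu,B)$ control, relying on empirical-process or U-process maximal inequalities of the type used in \citet{NolanP1987U} to upgrade pointwise remainder bounds to the stated supremum bound. Verifying that every remainder piece is genuinely $o_{\bbP}[h^q + \{\log n/(nh^d)\}^{1/2}]$ uniformly is the technical heart of the argument.
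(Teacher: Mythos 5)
Your proposal takes essentially the same route as the paper's proof of Lemma \ref{lem:tau_Dhat}: the same split $\widehat{D}_{i}=Z_{i}+(\widehat{D}_{i}-Z_{i})$, the same $O_{\bbP}(n^{-1/2})$ replacement of $\widehat{B}_{a}$ by $B_{a}$ via Theorem \ref{thm:B_a}, the same interchange-of-sums reduction of the imputation error to the leading term $\{1-\pi(\bfX_{i})\}\varepsilon_{1,i}-\pi(\bfX_{i})\varepsilon_{0,i}$, and the same control of the stochastic remainders as degenerate U-processes via the maximal inequalities of \citet{NolanP1987U}. One imprecision worth fixing: Condition A6 does not imply $h_{a}^{q_{a}}=o(h^{q})$ uniformly over the bandwidths admitted by Condition A3 (when the outer bandwidth sits near the small end of its admissible range, $h^{q}$ can be of much smaller order than $h_{a}^{q_{a}}$); what A6 gives --- being equivalent to $q_{a}/(2q_{a}+d_{a})>q/(2q+d)$ --- is that the inner estimator's optimal rate $n^{-q_{a}/(2q_{a}+d_{a})}$ is $o$ of the combined rate $h^{q}+\{\log n/(nh^{d})\}^{1/2}$, which is the bound the remainder $r_{n}$ actually has to meet and is how the paper argues.
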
 
\begin{proof}
First note that 
\begin{multline*}
\widehat{\tau}(\bfu;B)-\E(Z\cond B^{\T}\bfX=\bfu)=\frac{1}{n}\{\widehat{D}_{i}-\E(Z\cond B^{\T}\bfX=\bfu)\}\omega_{h,i}(\bfu;B)\\
=\frac{1}{n}\{Z_{i}-\E(Z\cond B^{\T}\bfX=\bfu)\}\omega_{h,i}(\bfu;B)+\frac{1}{n}\{\widehat{D}_{i}-Z_{i}\}\omega_{h,i}(\bfu;B).
\end{multline*}
Further, 
\begin{align}
 & \frac{1}{n}(\widehat{D}_{i}-Z_{i})\omega_{h,i}(\bfu;B)\nonumber \\
=\enspace & \frac{1}{n}\sum_{i=1}^{n}[(1-A_{i})\{\widehat{\mu}_{1}(\widehat{B}_{1}^{\T}\bfX_{i};\widehat{B}_{1})-\mu_{1}(\bfX_{i})\}-A_{i}\{\widehat{\mu}_{0}(\widehat{B}_{0}^{\T}\bfX_{i};\widehat{B}_{0})-\mu_{0}(\bfX_{i})\}]\omega_{h,i}(\bfu;B)\nonumber \\
=\enspace & \frac{1}{n}\sum_{i=1}^{n}(1-A_{i})\{\widehat{\mu}_{1}({B}_{1}^{\T}\bfX_{i};{B}_{1})-\mu_{1}(\bfX_{i})\}\omega_{h,i}(\bfu;B)\nonumber \\
 & -\frac{1}{n}\sum_{i=1}^{n}A_{i}\{\widehat{\mu}_{0}({B}_{0}^{\T}\bfX_{i};{B}_{0})-\mu_{0}(\bfX_{i})\}\omega_{h,i}(\bfu;B)+O_{\bbP}(n^{-1/2})\nonumber \\
\stackrel{\triangle}{=}\enspace & I_{1}+I_{2}+O_{\bbP}(n^{-1/2}),\label{eq:DhatZ}
\end{align}
because of $\Vert\vctl(\widehat{B}_{a}-B_{a})\Vert=O_{\bbP}(n^{-1/2})$
by Theorem 1. Now let $\kappa_{a,h,i}(\bfu)=\mathcal{K}_{q_{a},h}(B_{a}^{\T}\bfX_{i}-\bfu)/\sum_{j=1}^{n}1(A_{j}=a)\mathcal{K}_{q_{a},h}(B_{a}^{\T}\bfX_{j}-\bfu)$.
Then, we decompose $I_{1}$ into 
\begin{align}
 & \frac{1}{n}\sum_{i=1}^{n}(1-A_{i})\widehat{\mu}_{1}({B}_{1}^{\T}\bfX_{i};{B}_{1})-\mu_{1}(\bfX_{i})\}\omega_{h,i}(\bfu;B)\nonumber \\
=\enspace & \frac{1}{n}\sum_{i=1}^{n}\{1-\pi(\bfX_{i})\}\omega_{h,i}(\bfu;B)\sum_{j=1}^{n}\{Y_{j}-\mu_{1}(\bfX_{i})\}1(A_{j}=1)\kappa_{1,h_{1},j}(B_{1}^{\T}\bfX_{i})\nonumber \\
 & +\frac{1}{n}\sum_{i=1}^{n}\{\pi(\bfX_{i})-A_{i}\}\omega_{h,i}(\bfu;B)\sum_{j=1}^{n}\{Y_{j}-\mu_{1}(\bfX_{i})\}1(A_{j}=1)\kappa_{1,h_{1},j}(B_{1}^{\T}\bfX_{i})\nonumber \\
=\enspace & \frac{1}{n}\sum_{i=1}^{n}\{1-\pi(\bfX_{i})\}\varepsilon_{1,i}\omega_{h,i}(\bfu;B)\nonumber \\
 & +\frac{1}{n}\sum_{i=1}^{n}\{1-\pi(\bfX_{i})\}\left\{ \sum_{j=1}^{n}\varepsilon_{1,j}\kappa_{1,h_{1},j}(B_{1}^{\T}\bfX_{i})-\varepsilon_{1,i}\right\} \omega_{h,i}(\bfu;B)\nonumber \\
 & +\frac{1}{n}\sum_{i=1}^{n}\{1-\pi(\bfX_{i})\}\left[\sum_{j=1}^{n}\{\mu_{1}(\bfX_{j})-\mu_{1}(\bfX_{i})\}1(A_{j}=1)\kappa_{1,h_{1},j}(B_{1}^{\T}\bfX_{i})\right]\omega_{h,i}(\bfu;B)\nonumber \\
 & +\frac{1}{n}\sum_{i=1}^{n}\{\pi(\bfX_{i})-A_{i}\}\omega_{h,i}(\bfu;B)\sum_{j=1}^{n}\{Y_{j}-\mu_{1}(\bfX_{i})\}1(A_{j}=1)\kappa_{1,h_{1},j}(B_{1}^{\T}\bfX_{i})\nonumber \\
\stackrel{\triangle}{=}\enspace & J_{0}+J_{1}+J_{2}+J_{3}.\label{eq:DhatZ_mu_1}
\end{align}
To bound $J_{1}$, we re-write it as 
\begin{align*}
J_{1}=\frac{1}{n}\sum_{i=1}^{n}\varepsilon_{1,i}\left\{ \sum_{j=1}^{n}\{1-\pi(\bfX_{j})\}\omega_{h,j}(\bfu;B)\kappa_{1,h_{1},j}(B_{1}^{\T}\bfX_{i})-\{1-\pi(\bfX_{i})\}\omega_{h,i}(\bfu;B)\right\} .
\end{align*}
Since $\E(\varepsilon_{1,i}\cond\bfX_{i})=0$, we can show that $J_{1}$
is a degenerate U-process indexed by $(\bfu,B)$. An application of
Theorem 6 in \citet{NolanP1987U} ensures that $\E(\sup_{(\bfu,B)}\vert J_{1}\vert)\leq C/(n^{2}h_{1}^{d_{1}}h^{d})$.
Thus, by selecting $h_{1}$ in an optimal rate $O\{n^{-1/(2q_{1}+d_{1})}\}$
and coupled with Conditions A3 and A6, we have 
\begin{align}
\sup_{(\bfu,B)}\vert J_{1}\vert=o_{\bbP}\left\{ h^{q}+\left(\frac{\log n}{nh^{d}}\right)^{1/2}\right\} .\label{eq:J2}
\end{align}
Second, similar to the proofs in \citet{HuangC2017Effective}, standard
arguments in kernel smoothing estimation show that 
\begin{align*}
 & \sup_{i}\vert\sum_{j=1}^{n}\{\mu_{1}(\bfX_{j})-\mu_{1}(\bfX_{i})\}1(A_{j}=1)\kappa_{1,h_{1},j}(B_{1}^{\T}\bfX_{i})\vert\\
 & =O_{\bbP}\left\{ h_{1}^{q_{1}}+\left(\frac{\log n}{nh_{1}^{d_{1}}}\right)^{1/2}\right\} =O_{\bbP}\{n^{-q_{1}/(2q_{1}+d_{1})}\}
\end{align*}
by selecting $h_{1}$ in an optimal rate $O\{n^{-1/(2q_{1}+d_{1})}\}$.
Under Conditions A3 and A6, one can further show that this rate is
$o_{\bbP}[h^{q}+\{\log n/(nh^{d})\}^{1/2}]$. Thus, we have 
\begin{align}
\sup_{(\bfu,B)}\vert J_{2}\vert=o_{\bbP}\left\{ h^{q}+\left(\frac{\log n}{nh^{d}}\right)^{1/2}\right\} .\label{eq:J1}
\end{align}
Finally, note that $J_{3}$ is also a degenerate U-process indexed
by $(\bfu,B)$. Thus, by the same argument for $J_{1}$, we can show
that 
\begin{align}
\sup_{(\bfu,B)}\vert J_{3}\vert=o_{\bbP}\left\{ h^{q}+\left(\frac{\log n}{nh^{d}}\right)^{1/2}\right\} .\label{eq:J3}
\end{align}
By substituting (\ref{eq:J2})--(\ref{eq:J3}) into (\ref{eq:DhatZ_mu_1}),
we then have 
\begin{align}
\sup_{(\bfu,B)}\vert I_{1}-\frac{1}{n}\sum_{i=1}^{n}(1-A_{i})\varepsilon_{1,i}\omega_{h,i}(\bfu;B)\vert=o_{\bbP}\left\{ h^{q}+\left(\frac{\log n}{nh^{d}}\right)^{1/2}\right\} .\label{eq:I1}
\end{align}
Following the same arguments above, we can also show that 
\begin{align}
\sup_{(\bfu,B)}\vert I_{2}-\frac{1}{n}\sum_{i=1}^{n}A_{i}\varepsilon_{0,i}\omega_{h,i}(\bfu;B)\vert=o_{\bbP}\left\{ h^{q}+\left(\frac{\log n}{nh^{d}}\right)^{1/2}\right\} .\label{eq:I2}
\end{align}
Substituting (\ref{eq:I1})--(\ref{eq:I2}) into (\ref{eq:DhatZ})
completes the proof. 
\end{proof}
Now we derive the independent and identically distributed representations
of $\widehat{\tau}(B^{\T}\bfx;B)-\tau^{[0]}(\bfx;B)$ and $\partial_{\vctl(B)}\widehat{\tau}(B^{\T}\bfx;B)-\tau^{[1]}(\bfx;B)$.

\begin{lemma} Suppose that Assumption 1 and Conditions A1--A6 are
satisfied. Then, 
\begin{align}
 & \sup_{(\bfx,B)}\vert\widehat{\tau}(B^{\T}\bfx;B)-\tau^{[0]}(\bfx;B)-\frac{1}{n}\sum_{i=1}^{n}\eta_{h,i}^{[0]}(\bfx;B)\vert=o_{\bbP}\left(h^{2q}+\frac{\log n}{nh^{d}}\right),\label{eq:tau0_iid}\\
 & \sup_{(\bfx,B)}\Vert\partial_{\vctl(B)}\widehat{\tau}(B^{\T}\bfx;B)-\tau^{[1]}(\bfx;B)-\frac{1}{n}\sum_{i=1}^{n}\eta_{h,i}^{[1]}(\bfx;B)\Vert=o_{\bbP}\left(h^{2q}+\frac{\log n}{nh^{d+1}}\right),\label{eq:tau1_iid}
\end{align}
where 
\begin{align*}
\eta_{h,i}^{[0]}(\bfx;B) & =\frac{\xi_{i}(\bfx;B)}{f_{B^{\T}\bfX}(B^{\T}\bfx)}\mathcal{K}_{q,h}(B^{\T}\bfX_{i}-B^{\T}\bfx),\\
\eta_{h,i}^{[1]}(\bfx;B) & =\frac{\xi_{i}(\bfx;B)}{f_{B^{\T}\bfX}(B^{\T}\bfx)}\partial_{\vctl(B)}\mathcal{K}_{q,h}(B^{\T}\bfX_{i}-B^{\T}\bfx)\\
 & \quad-\tau^{[1]}(\bfx;B)\mathcal{K}_{q,h}(B^{\T}\bfX_{i}-B^{\T}\bfx)-\frac{f^{[1]}(\bfx,B^{\T}\bfx;B)}{f_{B^{\T}\bfX}(B^{\T}\bfx)}\eta_{h,i}^{[0]}(\bfx;B),
\end{align*}
and $\xi_{i}(\bfx;B)=Z_{i}-\E(Z\cond B^{\T}\bfX=B^{\T}\bfx)$. \end{lemma}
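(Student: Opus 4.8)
The plan is to treat $\widehat{\tau}(B^{\T}\bfx;B)$ as the Nadaraya--Watson ratio $\widehat{N}(\bfx;B)/\widehat{f}(\bfx;B)$, with $\widehat{N}(\bfx;B)=n^{-1}\sum_{j}Z_{j}\mathcal{K}_{q,h}(B^{\T}\bfX_{j}-B^{\T}\bfx)$ and $\widehat{f}(\bfx;B)=n^{-1}\sum_{j}\mathcal{K}_{q,h}(B^{\T}\bfX_{j}-B^{\T}\bfx)$, and to note that $\tau^{[0]}(\bfx;B)=\E(Z\cond B^{\T}\bfX=B^{\T}\bfx)$. Since the $\eta$-functions involve only $\xi_{i}(\bfx;B)=Z_{i}-\E(Z\cond B^{\T}\bfX=B^{\T}\bfx)$, I would first invoke Lemma \ref{lem:tau_Dhat} to pass from the feasible smoother built on $\widehat{D}_{j}$ to the oracle smoother of $Z_{j}$, the imputation average being tracked separately there, and then carry out second-order delta-method expansions of the ratio and of its $\vctl(B)$-derivative. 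The two recurring inputs are the uniform rates $\sup_{(\bfx,B)}|\widehat{f}-f_{B^{\T}\bfX}|$ and $\sup_{(\bfx,B)}|n^{-1}\sum_{j}\xi_{j}\mathcal{K}_{q,h}(B^{\T}\bfX_{j}-B^{\T}\bfx)|$, each of order $h^{q}+\{\log n/(nh^{d})\}^{1/2}$, obtained from a Nolan--Pollard maximal inequality for the Euclidean class generated by $K_{q}$ composed with the linear maps $\bfz\mapsto B^{\T}(\bfz-\bfx)$.

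For (\ref{eq:tau0_iid}) I would write $\widehat{\tau}-\tau^{[0]}=\{n^{-1}\sum_{j}\xi_{j}\mathcal{K}_{q,h}(B^{\T}\bfX_{j}-B^{\T}\bfx)\}/\widehat{f}$ and expand $1/\widehat{f}=1/f_{B^{\T}\bfX}-(\widehat{f}-f_{B^{\T}\bfX})/f_{B^{\T}\bfX}^{2}+\cdots$. The leading term $\{n^{-1}\sum_{j}\xi_{j}\mathcal{K}_{q,h}(\cdot)\}/f_{B^{\T}\bfX}=n^{-1}\sum_{j}\eta_{h,j}^{[0]}(\bfx;B)$ is exactly the claimed contribution. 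The next term is the product of the centered kernel average of $\xi$ (bias $O(h^{q})$, stochastic part $\{\log n/(nh^{d})\}^{1/2}$) and $\widehat{f}-f_{B^{\T}\bfX}$ (same orders); multiplying the bias and stochastic parts shows this product is negligible at the rate $h^{2q}+\log n/(nh^{d})$, which is the stated remainder.

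For (\ref{eq:tau1_iid}) I would differentiate via the quotient rule, $\partial_{\vctl(B)}\widehat{\tau}=\{\partial_{\vctl(B)}\widehat{N}-\widehat{\tau}\,\partial_{\vctl(B)}\widehat{f}\}/\widehat{f}$, using that differentiating $\mathcal{K}_{q,h}(B^{\T}\bfX_{j}-B^{\T}\bfx)$ in the lower entries of $B$ produces the chain-rule factors $(\bfX_{j,l}-\bfx_{l})$ together with one extra power of $h^{-1}$; this is precisely the mechanism that converts the population limits into the $f^{[m]}$ and $G^{[m]}$ built from $\E\{(\bfX_{l}-\bfx_{l})^{\otimes m}\cond B^{\T}\bfX=\bfu\}$. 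Writing $Z_{j}-\widehat{\tau}=\xi_{j}-(\widehat{\tau}-\tau^{[0]})$ and linearizing once more, the three pieces of $\eta_{h,i}^{[1]}$ emerge as (i) the fluctuation of the derivative numerator, $f_{B^{\T}\bfX}^{-1}\xi_{i}\partial_{\vctl(B)}\mathcal{K}_{q,h}(\cdot)$; (ii) the fluctuation of the denominator $\widehat{f}$ scaled by the population derivative, giving the $\tau^{[1]}(\bfx;B)$ term; and (iii) the propagation of the zeroth-order denominator correction through $-f^{[1]}/f_{B^{\T}\bfX}$ acting on $\eta_{h,i}^{[0]}$. The dominant remainder is a product of one differentiated kernel average (rate $\{\log n/(nh^{d+2})\}^{1/2}$, from the extra $h^{-1}$) and one undifferentiated kernel average (rate $\{\log n/(nh^{d})\}^{1/2}$), whose product is $\log n/(nh^{d+1})$, together with the $h^{2q}$ bias term, matching the claimed order.

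The main obstacle is the uniform-in-$(\bfx,B)$ control of these quadratic remainders. They are degenerate $U$-process remainders indexed by the finite-dimensional parameter $(\bfx,B)$ and the bandwidth $h$, and I would bound their suprema exactly as in the proof of Lemma \ref{lem:tau_Dhat}, via Theorem 6 of \citet{NolanP1987U} applied to the Euclidean classes generated by $K_{q}$ and its derivative. The only bookkeeping beyond Lemma \ref{lem:tau_Dhat} is to verify, after accounting for the factors $(\bfX_{l}-\bfx_{l})$, that the deterministic leading terms coincide with $\tau^{[1]}(\bfx;B)$ as defined through $G^{[m]}$ and $f_{\mathrm{inv}}^{[m]}$, and to check that the bandwidth range in Condition A3 keeps both $h^{2q}$ and the $h^{-1}$-inflated variance terms negligible relative to the first-order contributions.
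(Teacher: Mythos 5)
Your proposal follows essentially the same route as the paper's proof: the paper likewise disposes of (\ref{eq:tau0_iid}) by invoking Lemma \ref{lem:tau_Dhat}, establishes a derivative-kernel analogue of that lemma for the term $n^{-1}\sum_{i}\widehat{D}_{i}\partial_{\vctl(B)}\mathcal{K}_{q,h}(B^{\T}\bfX_{i}-B^{\T}\bfx)-G^{[1]}(\bfx,B^{\T}\bfx;B)$ "by paralleling the proof steps" (your item (i) and your Nolan--Pollard obstacle paragraph), and then performs exactly your quotient-rule/second-order Taylor expansion, with the three pieces of $\eta_{h,i}^{[1]}$ emerging as in your decomposition and the quadratic remainder bounded by squares of the first-order uniform rates. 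Your write-up is correct at the same level of rigor as the paper's (both inherit the same bookkeeping conventions, e.g.\ treating the squared first-order rates as negligible at the stated order), so there is nothing substantive to flag.
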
 
\begin{proof}
First, (\ref{eq:tau0_iid}) is a direct result of Lemma \ref{lem:tau_Dhat}.
As for (\ref{eq:tau1_iid}), note that 
\begin{multline}
\frac{1}{n}\sum_{i=1}^{n}\widehat{D}_{i}\partial_{\vctl(B)}\mathcal{K}_{q,h}(B^{\T}\bfX_{i}-B^{\T}\bfx)-G^{[1]}(\bfx,B^{\T}\bfx;B)\\
=\frac{1}{n}\sum_{i=1}^{n}\xi_{i}(\bfx;B)\partial_{\vctl(B)}\mathcal{K}_{q,h}(B^{\T}\bfX_{i}-B^{\T}\bfx)+r_{1n}(\bfx;B),\label{eq:dNn}
\end{multline}
where $\sup_{(\bfx,B)}\vert r_{1n}(\bfx,B)\vert=o_{\bbP}[h^{q}+\{\log n/(nh^{d+1})\}^{1/2}]$,
by paralleling the proof steps of Lemma \ref{lem:tau_Dhat}. Now by
using the Taylor expansion, we have 
\begin{align}
 & \partial_{\vctl(B)}\widehat{\tau}(B^{\T}\bfx;B)-\tau^{[1]}(\bfx;B)\nonumber \\
=\enspace & \frac{\sum_{i=1}^{n}\widehat{D}_{i}\partial_{\vctl(B)}\mathcal{K}_{q,h}(B^{\T}\bfX_{i}-B^{\T}\bfx)/n-\tau^{[0]}(\bfx;B)\sum_{i=1}^{n}\partial_{\vctl(B)}\mathcal{K}_{q,h}(B^{\T}\bfX_{i}-B^{\T}\bfx)/n}{f_{B^{\T}\bfX}(B^{\T}\bfx)}\nonumber \\
 & -\frac{\tau^{[1]}(\bfx;B)}{n}\sum_{i=1}^{n}\mathcal{K}_{q,h}(B^{\T}\bfX_{i}-B^{\T}\bfx)-\frac{f^{[1]}(\bfx,B^{\T}\bfx;B)}{f_{B^{\T}\bfX}(B^{\T}\bfx)}\{\widehat{\tau}(B^{\T}\bfx;B)-\tau^{[0]}(\bfx;B)\}\nonumber \\
 & +r_{2n}(\bfx;B),\label{eq:tau1_Taylor}
\end{align}
where 
\begin{eqnarray*}
r_{2n}(\bfx,B) & = & O_{\bbP}\{|\widehat{\tau}(B^{\T}\bfx;B)-\tau^{[0]}(\bfx;B)|^{2}\\
 &  & +\Vert\sum_{i=1}^{n}\widehat{D}_{i}\partial_{\vctl(B)}\mathcal{K}_{q,h}(B^{\T}\bfX_{i}-B^{\T}\bfx)/n-G^{[1]}(\bfx,B^{\T}\bfx;B)\Vert^{2}\}.
\end{eqnarray*}
Finally, substituting the result in Lemma \ref{lem:tau_Dhat} and
(\ref{eq:dNn}) into (\ref{eq:tau1_Taylor}) completes the proof. 
\end{proof}
\begin{corollary}\label{cor:rate} Suppose that Assumption 1 and
Conditions A1--A6 are satisfied. Then, 
\begin{align*}
 & \sup_{(\bfx,B)}\vert\widehat{\tau}(B^{\T}\bfx;B)-\tau^{[0]}(\bfx;B)\vert=O_{\bbP}\left\{ h^{q}+\left(\frac{\log n}{nh^{d}}\right)^{1/2}\right\} ,\\
 & \sup_{(\bfx,B)}\Vert\partial_{\vctl(B)}\widehat{\tau}(B^{\T}\bfx;B)-\tau^{[1]}(\bfx;B)\Vert=O_{\bbP}\left\{ h^{q}+\left(\frac{\log n}{nh^{d+1}}\right)^{1/2}\right\} .
\end{align*}
\end{corollary}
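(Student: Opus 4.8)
The plan is to read both bounds off the i.i.d.\ representations (\ref{eq:tau0_iid})--(\ref{eq:tau1_iid}) established in the preceding lemma, the point being that the remainder terms there are negligible relative to the rates claimed in the corollary. Writing $R_{0}=h^{q}+(\log n/nh^{d})^{1/2}$ for the first target rate, the inequality $h^{q}(\log n/nh^{d})^{1/2}\leq\tfrac{1}{2}\{h^{2q}+\log n/(nh^{d})\}$ gives $R_{0}^{2}\asymp h^{2q}+\log n/(nh^{d})$, so the remainder $o_{\bbP}(h^{2q}+\log n/nh^{d})=o_{\bbP}(R_{0}^{2})=o_{\bbP}(R_{0})$ since $R_{0}\to0$; the same comparison applies to (\ref{eq:tau1_iid}) with $h^{d}$ replaced by $h^{d+1}$. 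Hence it suffices to show that the leading i.i.d.\ averages $n^{-1}\sum_{i}\eta_{h,i}^{[0]}(\bfx;B)$ and $n^{-1}\sum_{i}\eta_{h,i}^{[1]}(\bfx;B)$ are $O_{\bbP}(R_{0})$ and $O_{\bbP}\{h^{q}+(\log n/nh^{d+1})^{1/2}\}$, respectively, uniformly over $(\bfx,B)$.

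For the zeroth-order term I would split the average into its mean and its centered part, $n^{-1}\sum_{i}\eta_{h,i}^{[0]}=\E\{\eta_{h,1}^{[0]}\}+n^{-1}\sum_{i}\{\eta_{h,i}^{[0]}-\E\eta_{h,i}^{[0]}\}$. Since $\eta_{h,i}^{[0]}(\bfx;B)=f_{B^{\T}\bfX}(B^{\T}\bfx)^{-1}\{Z_{i}-\E(Z\cond B^{\T}\bfX=B^{\T}\bfx)\}\mathcal{K}_{q,h}(B^{\T}\bfX_{i}-B^{\T}\bfx)$ centers $Z_{i}$ exactly at its conditional mean $\tau^{[0]}(\bfx;B)$, its expectation is the standard kernel-regression bias: a change of variables together with the $q$th-order vanishing moments of $K_{q}$ and the Lipschitz smoothness in Condition A1 (with constants uniform in $(\bfx,B)$) yields $\sup_{(\bfx,B)}|\E\eta_{h,1}^{[0]}(\bfx;B)|=O(h^{q})$. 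For the centered part I would invoke the same empirical-process machinery already used to bound the degenerate $U$-processes $J_{1}$ and $J_{3}$ in the proof of Lemma \ref{lem:tau_Dhat}, namely Theorem~6 of \citet{NolanP1987U}, here applied to the (degree-one) kernel-weighted empirical process indexed by $(\bfx,B)$; this gives $\sup_{(\bfx,B)}|n^{-1}\sum_{i}\{\eta_{h,i}^{[0]}-\E\eta_{h,i}^{[0]}\}|=O_{\bbP}\{(\log n/nh^{d})^{1/2}\}$. Adding the two bounds establishes the first display.

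The first-order statement follows along identical lines, the only change being that $\eta_{h,i}^{[1]}$ involves the kernel derivative $\partial_{\vctl(B)}\mathcal{K}_{q,h}(B^{\T}\bfX_{i}-B^{\T}\bfx)$ in place of $\mathcal{K}_{q,h}$. Differentiating the kernel in $\vctl(B)$ produces a compactly supported kernel of the same vanishing-moment order, so the bias is unaffected and remains $O(h^{q})$, while the larger $L_{2}$ norm of the derivative kernel raises the power of $h^{-1}$ governing the centered part, so the maximal inequality now delivers $O_{\bbP}\{(\log n/nh^{d+1})^{1/2}\}$; this gives the second display. Both claims then combine the $O(h^{q})$ bias with the stochastic terms exactly as stated.

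The step I expect to be the main obstacle is the uniform-in-$(\bfx,B)$ control of the centered kernel averages over the infinite index set: to apply the maximal inequality one must first verify that the classes $\{(\bfz,\bfx)\mapsto\{Z(\bfz)-\E(Z\cond B^{\T}\bfX=B^{\T}\bfx)\}\mathcal{K}_{q,h}(B^{\T}\bfz-B^{\T}\bfx):(\bfx,B)\}$ and their derivative analogues are Euclidean (manageable) classes, with envelopes of the correct order in $h$ and with the VC indices bounded uniformly in $h$. This measurability-and-entropy bookkeeping is precisely what was carried out for $J_{1}$ and $J_{3}$ in Lemma \ref{lem:tau_Dhat} and in \citet{HuangC2017Effective}, so I would reuse those arguments rather than reprove them; once the Euclidean property is in hand, the corollary is immediate from the bias-plus-variance decomposition above.
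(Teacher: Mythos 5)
Your proposal is correct and takes essentially the same route the paper intends: the result is stated as an immediate corollary of the i.i.d.-representation lemma, and your derivation---noting that the remainders $o_{\bbP}\{h^{2q}+\log n/(nh^{d})\}$ and $o_{\bbP}\{h^{2q}+\log n/(nh^{d+1})\}$ are $o_{\bbP}$ of the squared target rates and hence of the target rates themselves (since these tend to zero under Condition A3), and then bounding the leading averages $n^{-1}\sum_{i}\eta_{h,i}^{[0]}$ and $n^{-1}\sum_{i}\eta_{h,i}^{[1]}$ by a uniform $O(h^{q})$ bias (Condition A1 plus the $q$th-order kernel) and a centered part controlled uniformly over $(\bfx,B)$ via the Nolan--Pollard/Euclidean-class machinery already used for $J_{1}$ and $J_{3}$---is precisely the argument the paper leaves implicit. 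One caveat, shared with the paper's own lemma and corollary and therefore not a gap in your proof relative to it: a literal second-moment computation for the differentiated kernel $\partial_{\vctl(B)}\mathcal{K}_{q,h}$ gives variance of order $h^{-d-2}$ (envelope $h^{-d-1}$, support volume $h^{d}$), so the maximal inequality as you invoke it would actually deliver $O_{\bbP}\{(\log n/(nh^{d+2}))^{1/2}\}$ for the centered part, not the $(\log n/(nh^{d+1}))^{1/2}$ that both you and the paper record.
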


\section{Proofs of Theorems 2 and 3\label{sec:Proofs-of-Theorems}}

\subsection{Proof of Theorem 2}
\begin{proof}
Let $\bar{\tau}^{-i}(B^{\T}\bfX_{i};B)={\sum_{j\neq i}Z_{j}\mathcal{K}_{q,h}(B^{\T}\bfX_{j}-B^{\T}\bfX_{i})}/{\sum_{j\neq i}\mathcal{K}_{q,h}(B^{\T}\bfX_{j}-B^{\T}\bfX_{i})}$.
We can decompose $\textsc{cv}(d,B,h)$ into 
\begin{align*}
\textsc{cv}(d,B,h) & =\frac{1}{n}\sum_{i=1}^{n}\{Z_{i}-\bar{\tau}^{-i}(B^{\T}\bfX_{i};B)\}^{2}+\frac{1}{n}\sum_{i=1}^{n}(\widehat{D}_{i}-Z_{i})^{2}\\
 & +\frac{1}{n}\sum_{i=1}^{n}\{\widetilde{\tau}^{-i}(B^{\T}\bfX_{i};B)-\bar{\tau}^{-i}(B^{\T}\bfX_{i};B)\}^{2}\\
 & +\frac{2}{n}\sum_{i=1}^{n}(\widehat{D}_{i}-Z_{i})\{\widetilde{\tau}^{-i}(B^{\T}\bfX_{i};B)-\bar{\tau}^{-i}(B^{\T}\bfX_{i};B)\}\\
 & +\frac{2}{n}\sum_{i=1}^{n}(\widehat{D}_{i}-Z_{i})\{Z_{i}-\tau(\bfX_{i})\}+\frac{2}{n}\sum_{i=1}^{n}(\widehat{D}_{i}-Z_{i})\{\tau(\bfX_{i})-\bar{\tau}^{-i}(B^{\T}\bfX_{i};B)\}\\
 & +\frac{2}{n}\sum_{i=1}^{n}\{Z_{i}-\tau(\bfX_{i})\}\{\widetilde{\tau}^{-i}(B^{\T}\bfX_{i};B)-\bar{\tau}^{-i}(B^{\T}\bfX_{i};B)\}\\
 & +\frac{2}{n}\sum_{i=1}^{n}\{\tau(\bfX_{i})-\bar{\tau}^{-i}(B^{\T}\bfX_{i};B)\}\{\widetilde{\tau}^{-i}(B^{\T}\bfX_{i};B)-\bar{\tau}^{-i}(B^{\T}\bfX_{i};B)\}\\
 & \stackrel{\triangle}{=}SS_{1}+SS_{2}+SS_{3}+SC_{1}+SC_{2}+SC_{3}+SC_{4}+SC_{5}.
\end{align*}
Note that 
\begin{equation}
\sup_{i}\vert\widehat{D}_{i}-Z_{i}\vert\leq\sum_{a=0}^{1}\sup_{(\bfu,B)}\vert\widehat{\mu}_{a}(\bfu;B)-\mu_{a}(\bfu;B)\vert=o_{\bbP}\left\{ h^{q}+\left(\frac{\log n}{nh^{d}}\right)^{1/2}\right\} ,\label{eq:DhatZ_rate}
\end{equation}
\begin{multline}
\sup_{(i,B)}\vert\widetilde{\tau}^{-i}(B^{\T}\bfX_{i};B)-\bar{\tau}^{-i}(B^{\T}\bfX_{i};B)\vert\leq C\sum_{a=0}^{1}\sup_{(\bfu,B)}\vert\widehat{\mu}_{a}(\bfu;B)-\mu_{a}(\bfu;B)\vert\\
=o_{\bbP}\left\{ h^{q}+\left(\frac{\log n}{nh^{d}}\right)^{1/2}\right\} \label{eq:taudiff_rate}
\end{multline}
for some positive constant $C$, by using Conditions A1--A3, Condition
A6, and standard arguments in kernel smoothing estimation.

When $\csp(B)\supseteq\csp(B_{{\tau}})$, Theorem 1 of \citet{HuangC2017Effective}
implies that $SS_{1}=\sigma_{{\tau}}^{2}+O_{\bbP}\{h^{2q}+\log n/(nh^{d})\}$,
where $\sigma_{{\tau}}^{2}=\E[\{Z-\tau(\bfX)\}^{2}]$. From (\ref{eq:DhatZ_rate})--(\ref{eq:taudiff_rate}),
$\sup_{B}\vert SS_{3}\vert$ and $\sup_{B}\vert SC_{1}\vert$ are
of order $o_{\bbP}\{h^{2q}+\log n/(nh^{d})\}$. Further, by using
$\sup_{(\bfx,B)}\vert\bar{\tau}(B^{\T}\bfx;B)-\tau(\bfx)\vert=O_{\bbP}[h^{q}+\{\log n/(nh^{d})\}^{1/2}]$,
$\sup_{B}\vert SC_{3}\vert$ and $\sup_{B}\vert SC_{5}\vert$ are
also of order $o_{\bbP}\{h^{2q}+\log n/(nh^{d})\}$. Now note that
$SC_{4}$ can be expressed a U-process indexed by $B$ asymptotically.
By using the same proof steps for the cross term in Theorem 1 of \citet{HuangC2017Effective},
one can immediately conclude that $\sup_{B}\vert SC_{4}\vert=o_{\bbP}\{h^{2q}+\log n/(nh^{d})\}$.
Combining the results above, we have $\textsc{cv}(d,B,h)=SS_{1}+SS_{2}+SC_{2}+o_{p}(SS_{1})$
uniformly in $B$. When $\csp(B)\nsupseteq\csp(B_{{\tau}})$, Theorem
1 of \citet{HuangC2017Effective} implies that $SS_{1}=\sigma_{{\tau}}^{2}+b_{{\tau}}^{2}(B)+o_{\bbP}(1)$.
By using (\ref{eq:DhatZ_rate})--(\ref{eq:taudiff_rate}) again,
we have $\textsc{cv}(d,B,h)=SS_{1}+SS_{2}+SC_{2}+o_{\bbP}(1)$ uniformly
in $B$. Finally, since $SS_{2}$ and $SC_{2}$ are independent of
$B$, the minimizer of $\textsc{cv}(d,B,h)$ has the same asymptotic
distribution as the minimizer of $SS_{1}$. Thus, Theorem 2 is a direct
result of Theorem 2 in \citet{HuangC2017Effective}. 
\end{proof}

\subsection{Proof of Theorem 3}
\begin{proof}
By using first-ordered Taylor expansion, we have 
\begin{align*}
\widehat{\tau}(\widehat{B}^{\T}\bfx;\widehat{B})-\tau(\bfx) & =\widehat{\tau}(\widehat{B}^{\T}\bfx;\widehat{B})-\widehat{\tau}(B_{{\tau}}^{\T}\bfx;B_{{\tau}})+\widehat{\tau}(B_{{\tau}}^{\T}\bfx;B_{{\tau}})-\tau(\bfx)\\
 & =\partial_{\vctl(B)}\widehat{\tau}(\bar{B}^{\T}\bfx;\bar{B})\vctl(\widehat{B}-B_{{\tau}})+\widehat{\tau}(B_{{\tau}}^{\T}\bfx;B_{{\tau}})-\tau(\bfx),
\end{align*}
where $\bar{B}$ lies on the line segment between $\widehat{B}$ and
$B_{{\tau}}$. From Theorem 2, $\vctl(\widehat{B}-B_{{\tau}})=O_{\bbP}(n^{-1/2})$.
Coupled with Corollary \ref{cor:rate} and continuous mapping theorem,
$\partial_{\vctl(B)}\widehat{\tau}(\bar{B}^{\T}\bfx;\bar{B})=O_{\bbP}(1)$.
Moreover, from (\ref{eq:tau0_iid}), we have 
\begin{align*}
(nh_{\tau}^{d_{{\tau}}})^{1/2}\{\widehat{\tau}(B_{{\tau}}^{\T}\bfx;B_{{\tau}})-\tau(\bfx)\}-h_{\tau}^{q_{\tau}}\gamma(\bfx)\to{\rm N}\{0,\sigma_{\tau}^{2}(\bfx)\}
\end{align*}
in distribution as $n\to\infty$. Combining the results above completes
the proof of Theorem 3. 
\end{proof}

\end{document}